\def\sthat{\ \ \mbox{such that}\ \ }
\newtheorem{theorem}{Theorem}
\newtheorem{lemma}[theorem]{Lemma}
\def\bkE{{\rm I\kern-.17em E}}
\def\bk1{{\rm 1\kern-.17em l}}
\def\bkD{{\rm I\kern-.17em D}}
\def\bkR{{\rm I\kern-.17em R}}
\def\bkP{{\rm I\kern-.17em P}}
\def\bkZ{{\bf{Z}}}
\def\bfone{{\bf 1}}
\def\bkE{{\rm I\kern-.17em E}}
\def\bk1{{\rm 1\kern-.17em l}}
\def\bkD{{\rm I\kern-.17em D}}
\def\bkR{{\rm I\kern-.17em R}}
\def\bkP{{\rm I\kern-.17em P}}
\def\bkZ{{\bf{Z}}}
\def\b12{(\beta_1,\beta_2)}
\newenvironment{proof}[1][]{{\noindent \bf Proof #1: }}{\hfill \qed\vspace{6pt}}
\newcounter{example}
\renewcommand{\theexample}{\arabic{example}}
\newenvironment{examplec}[1][]{\refstepcounter{example}
\par\medskip \noindent%
   \textbf{Example~\theexample. #1} \rmfamily}{\hfill $\square$   \hspace{-4.5pt} \vspace{6pt}}
\newlength{\noteWidth}
\long\def\notes#1{\ifinner
{\tiny #1}
\else
\marginpar{\parbox[t]{\noteWidth}{\raggedright\tiny #1}}
\fi\typeout{#1}}
 \def\notes#1{\typeout{read notes: #1}} 
\newcommand{\ie}{i.e.\@\xspace} 
\newcommand{\eg}{e.g.\@\xspace} 
\newcommand{\Real}{\mathbb{R}}
\newcommand{\minimize}[1]{\displaystyle\minim_{#1}}
\newcommand{\minim}{\mathop{\hbox{\rm minimize}}}
\newcommand{\maximize}[1]{\displaystyle\maxim_{#1}}
\newcommand{\maxim}{\mathop{\hbox{\rm maximize}}}
\def\subject{\hbox{\rm subject to}}
\def\diag{\mathop{\hbox{\rm diag}}}
\def\half  {{\textstyle{1\over 2}}}
\def\conv{\textrm{conv}\:}
\def\norm#1{\|#1\|}
\def\spose#1{\hbox to 0pt{#1\hss}}
\def\text #1{\hbox{\quad#1\quad}}
\def\nthinsp{\mskip -2   mu}
\def\superstar{^{\raise 0.5pt\hbox{$\nthinsp *$}}}
\def\SUPERSTAR{^{\raise 0.5pt\hbox{$*$}}}
\def\lamstarT {\lambda^{\raise 0.5pt\hbox{$\nthinsp *$}T}}
\def\aur{\;\textrm{and}\;}
\def\SOL{{\rm SOL}}
\def\LCP{{\rm LCP}}
\let\forallnew\forall
\renewcommand{\forall}{\forallnew\ }
\let\forall\forallnew
\def\t{^\top}
		\def\bkE{{\rm I\kern-.17em E}}
		\def\bk1{{\rm 1\kern-.17em l}}
		\def\bkD{{\rm I\kern-.17em D}}
		\def\bkR{{\rm I\kern-.17em R}}
		\def\bkP{{\rm I\kern-.17em P}}
		\def\bkY{{\bf \kern-.17em Y}}
		\def\bkZ{{\bf \kern-.17em Z}}
		\def\bkC{{\bf  \kern-.17em C}}
		\def\bsp{\begin{split}}
		\def\beq{\begin{eqnarray}}
		\def\bal{\begin{align*}}
		\def\bc{\begin{center}}
		\def\be{\begin{enumerate}}
		\def\bi{\begin{itemize}}
		\def\bs{\begin{small}}
		\def\bS{\begin{slide}}
		\def\ec{\end{center}}
		\def\ee{\end{enumerate}}
		\def\ei{\end{itemize}}
		\def\es{\end{small}}
		\def\eS{\end{slide}}
		\def\eeq{\end{eqnarray}}
		\def\eal{\end{align*}}
		\def\esp{\end{split}}
		\def\qed{ \vrule height7.5pt width7.5pt depth0pt}  
		\def\problem#1#2#3#4{\fbox
		 {\begin{tabular*}{0.80\textwidth}
			{@{}l@{\extracolsep{\fill}}l@{\extracolsep{6pt}}l@{\extracolsep{\fill}}c@{}}
				#1 & $\minimize{#2}$ & $#3$ & $ $ \\[5pt]
					 & $\subject\ $    & $#4$ & $ $
			\end{tabular*}}
			}
\def\maxproblem#1#2#3#4{\fbox
		 {\begin{tabular*}{0.80\textwidth}
			{@{}l@{\extracolsep{\fill}}l@{\extracolsep{6pt}}l@{\extracolsep{\fill}}c@{}}
				#1 & $\maximize{#2}$ & $#3$ & $ $ \\[5pt]
					 & $\subject\ $    & $#4$ & $ $
			\end{tabular*}}
			}
				\def\viproblem#1#2#3{\fbox
		 {\begin{tabular*}{0.95\textwidth}
			{@{}l@{\extracolsep{\fill}}l@{\extracolsep{6pt}}l@{\extracolsep{\fill}}c@{}}
				#1 &#2 & $#3 $ 
			\end{tabular*}}}
	\def\cp2problem#1#2#3#4{\fbox
		 {\begin{tabular*}{0.9\textwidth}
			{@{}l@{\extracolsep{\fill}}l@{\extracolsep{6pt}}l@{\extracolsep{\fill}}c@{}}
				#1 & & $#4 $ 
			\end{tabular*}}}
\newcommand{\pmat}[1]{\begin{pmatrix} #1 \end{pmatrix}}
		\renewcommand{\emph}[1]{\textbf{#1}}
		\def\bkE{{\rm I\kern-.17em E}}
		\def\bk1{{\rm 1\kern-.17em l}}
		\def\bkD{{\rm I\kern-.17em D}}
		\def\bkR{{\rm I\kern-.17em R}}
		\def\bkP{{\rm I\kern-.17em P}}
		\def\bkZ{{\bf{Z}}}
\newcommand {\beeq}[1]{\begin{equation}\label{#1}}
\newcommand {\eeeq}{\end{equation}}
\newcommand {\bea}{\begin{eqnarray}}
\newcommand {\eea}{\end{eqnarray}}
\def\texitem#1{\par\smallskip\noindent\hangindent 25pt
               \hbox to 25pt {\hss #1 ~}\ignorespaces}
\newcommand{\bin}{\{0,1\}}
\newcommand{\scn}[2]{\mathcal{C}_{#1}(#2)}
\newcommand{\bfe}{\textbf{e}}
\newcommand{\STAB}{{\rm STAB}}
\newcommand{\FRAC}{{\rm FRAC}}
\newcommand{\TB}{{\rm TH}}
\title{\bf A linear complementarity based characterization of the weighted independence number and the independent domination number in graphs}
\author{Parthe Pandit \and Ankur A. Kulkarni\thanks{Parthe and Ankur are with the Systems and Control Engineering group at the  
Indian Institute of Technology Bombay
Mumbai, India 400076. They can be contacted at \texttt{\small parthe.pandit@iitb.ac.in} and \texttt{\small kulkarni.ankur@iitb.ac.in}, respectively.} }
\begin{document}
\date{}

\maketitle

\begin{abstract}
The linear complementarity problem is a continuous optimization problem that generalizes convex quadratic programming, Nash equilibria of bimatrix games and several such problems. 
This paper presents a continuous optimization formulation for the weighted independence number of a graph by characterizing it as the maximum weighted $\ell_1$ norm over the solution set of a linear complementarity problem~(LCP). The minimum $\ell_1$ norm of solutions of this LCP is a lower bound on the independent domination number of the graph. Unlike the case of the maximum $\ell_1$ norm, this lower bound is in general weak, but we show it to be tight if the graph is a forest. Using methods from the theory of LCPs, we obtain a few graph theoretic results. In particular, we provide a stronger variant of the Lov\'asz theta of a graph. 
We then provide sufficient conditions for a graph to be well-covered, i.e., for all maximal independent sets to also be maximum. This condition is also shown to be necessary for well-coveredness if the graph is a forest. Finally, the reduction of the maximum independent set problem to a linear program with (linear) complementarity constraints~(LPCC) shows that LPCCs are hard to approximate.
\end{abstract}

\section{Introduction}
This paper concerns a new continuous optimization formulation for the independence number of a graph. An undirected graph $G$ is given by the pair $(V,E)$ where $V$ is a finite set of \textit{vertices} and $E$ is a set of unordered pairs of vertices called \textit{edges}. Two vertices $i,j \in V$ are said to be connected if there exists an edge $(i,j) \in E$ between them. Connected vertices are also called neighbours. An independent set of $G$ is a set of pairwise disconnected vertices and an independent set of largest cardinality called a maximum independent set. The cardinality of the maximum independent set is called the independence number of $G$ denoted $\alpha(G)$. 

Closely related are the concepts of maximality and domination. An independent set is said to be \textit{maximal} if it is not a subset of any larger independent set. Clearly a maximum independent set is maximal but the converse not true in general. A set $S \subseteq V$ is a \textit{dominating set} if every $v \in V\backslash S$ has a neighbour $u\in S.$ One can show that every vertex not in a maximal independent set has at least one neighbour in the set, whereby maximal independent sets are also dominating sets.

Computing the independence number of a general graph is NP-complete, although it is known to be solvable in polynomial time for some subclasses, such as claw-free graphs and perfect graphs \cite{minty1980maximal, grotschel1986relaxations}. Computing the independence number is clearly a discrete optimization problem. However there are several continuous optimization formulations for this quantity. Perhaps the most well known amongst them is the result by Motzkin and Strauss  \cite{motzkin1965maxima} which shows that for a graph $G$ with $n$ vertices,
\[\frac{1}{\alpha(G)} = \min\{x \t (A+I)x \mid \bfe \t x = 1;\; x \geq 0 \},\]
where $\bfe$ is the vector\footnote{Throughout this paper, vectors are column vectors} of 1's in $\Real^n$, $A = [a_{ij}]$ is the adjacency matrix of $G$ (\ie, $a_{ij} = 1$ if $(i,j) \in E$ and $=0$ otherwise), and $I$ is the $n\times n$ identity matrix. Among other continuous formulations, the ones by Harant are noteworthy \cite{harant1999dominating, harant2000some}. Specifically, \cite[Theorem 7]{harant2000some} shows,

\[\alpha(G) = \max\{ \bfe \t x - \frac{1}{2} x \t A x \mid 0 \leq x \leq \bfe \}.\]

For a given weight vector $w\in \Real^n$, the weight of a set $S \subseteq V$ is the quantity $\sum_{i \in S} w_i$. The weighted independence number denoted $\alpha_w(G)$ is the maximum of the weights over all the independent sets, \ie, 

$$\alpha_w(G) := \max \left \{ \sum_{i \in S} w_i \mid S \subseteq V {\rm \;is\; independent}\right \}.$$

Clearly $\alpha(G)$ is $\alpha_{\bfe}(G)$. 
This paper characterizes the weighted independence number of a graph in terms of the linear complementarity problem (LCP). Given a matrix $M \in \Real^{n \times n}$ and $q\in \Real^n$, $\LCP(M,q)$ is the following problem:
\begin{align*}
\text{Find} x=(x_1,x_2\cdots x_n) \in \Real^n \quad \sthat \quad
& (1) \quad x \geq 0,\\
& (2) \quad y = Mx + q \geq 0, \tag*{LCP($M,q$)} \\
& (3) \quad y \t x = 0. 
\end{align*}
Notice that due to the nonnegativity of $x$ and $y$, the last condition is equivalent to requiring $x_iy_i=0$ for all $i.$ This requirement is referred to as the \textit{complementarity condition}. 
A vector $x$ is said to be a solution of $\LCP(M,q)$ if it satisfies the above three conditions. LCPs arise naturally in the characterization of equilibria in bimatrix games and several other problems in operations research. We discuss this problem class later in this paper. 

For a simple graph\footnote{We consider only simple graphs, i.e., graphs without self loops which means $a_{ii} = 0$ for all $i \in V$.} $G=(V,E)$ with $|V|=n$ vertices, consider the $\LCP(A+I,-\bfe)$, i.e., 
$$ \viproblem{$\LCP(G)$}{Find $ x\in \Real^n $ such that }{x\geq 0,\ (A+I)x \geq \bfe, \ x \t \big((A+I)x - \bfe\big)= 0.} $$
We refer to this as $\LCP(G)$ and its solution set as $\SOL(G)$. Let the characteristic vector of a set $S \subseteq V$ be denoted by $\bfone_S$; it is the vector in $\bin^n$ whose $i^{th}$ element is 1 \textit{iff} $i \in S$. It is easy to show that if $S^* \subseteq V$ is a maximum independent set in $G$ then $\bfone_{S^*}$ solves $\LCP(G)$. Consequently, we always have,

\begin{equation} \label{eq:alpha inequality}  \alpha(G) \leq \max\{\bfe\t x \mid x {\rm \;solves\;} \LCP(G)\}.  \end{equation}

Our main result in this paper shows that the inequality in \eqref{eq:alpha inequality} is always tight, even for the weighted independence number.

\begin{theorem}
\label{thm:main} For any simple graph $G$ and weight vector $w \geq 0$,
\[\alpha_w(G) = \max\{w \t x \mid x {\rm \;solves}\; \LCP(A+I,-\bfe)\} ,\]
where $A$ is the adjacency matrix of $G$, $I$ is the identity matrix and $\bfe$ is the vector of 1's in $\Real^n$.
\end{theorem}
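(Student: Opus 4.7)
My plan is to prove the two inequalities separately. The direction $\alpha_w(G) \leq \max\{w\t x : x \in \SOL(G)\}$ is immediate: for any maximum weighted independent set $S^*$ of $G$, the characteristic vector $\bfone_{S^*}$ lies in $\SOL(G)$ by exactly the argument sketched before \eqref{eq:alpha inequality} in the paper (which does not depend on $w$), and $w\t \bfone_{S^*} = \alpha_w(G)$.

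The substantive direction is to show $w\t x \leq \alpha_w(G)$ for every $x \in \SOL(G)$. Let $T := \{v \in V : x_v > 0\}$ denote the support of $x$. The complementarity condition, together with $x \geq 0$ and $(A+I)x \geq \bfe$, forces the identity
\[ x_v + \sum_{u \in N(v)} x_u \;=\; 1 \qquad \text{for every } v \in T. \]
My strategy is to run an exchange argument against a maximum weighted independent set $S^*$ not of $G$, but of the \emph{induced} subgraph $G[T]$. Since $S^* \subseteq T$ by construction, the identity above applies to every $v \in S^*$.

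For each $u \in T \setminus S^*$, the set $(S^* \setminus N(u)) \cup \{u\}$ is independent in $G[T]$, so maximality of $S^*$ in $G[T]$ gives $w_u \leq \sum_{v \in S^* \cap N(u)} w_v$. Multiplying by $x_u \geq 0$, summing over $u \in T \setminus S^*$, and interchanging the order of summation (using $N(v) \cap S^* = \emptyset$ for $v \in S^*$, by independence) bounds the off-$S^*$ mass by $\sum_{v \in S^*} w_v \sum_{u \in N(v)} x_u$. Adding back the $S^*$-mass $\sum_{v \in S^*} w_v x_v$, and noting $x_u = 0$ outside $T$, the resulting upper bound on $w\t x$ becomes $\sum_{v \in S^*} w_v \cdot ((A+I)x)_v$, which equals $w(S^*)$ by the displayed identity and $S^* \subseteq T$. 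Finally $w(S^*) = \alpha_w(G[T]) \leq \alpha_w(G)$ because any independent set of $G[T]$ is also independent in $G$ with the same weight.

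I expect the main subtlety to be exactly the choice of $S^*$. If one picks a maximum weighted independent set of $G$ itself, it may contain vertices outside $T$, where only $((A+I)x)_v \geq 1$ (not equality) is available from the LCP; the final collapse of the upper bound to $w(S^*)$ then breaks. Restricting $S^*$ to $G[T]$ is the small but essential move that makes the exchange argument tight.
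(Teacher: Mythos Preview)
Your proof is correct. The core exchange inequality you use --- that for a maximum weighted independent set $S^*$ and any $u \notin S^*$, swapping $u$ in and its $S^*$-neighbours out cannot increase the weight, hence $w_u \leq \sum_{v \in S^* \cap N(u)} w_v$ --- is exactly the engine of the paper's argument as well. The difference lies in how that engine is deployed.

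The paper proceeds by induction on $|V|$. It takes a maximizer $x^*$ and splits into two cases: if $\sigma(x^*)=V$ (full support), it runs the exchange argument against a maximum weighted independent set $S$ of $G$ itself, which works because $S \subseteq V = \sigma(x^*)$ guarantees $\mathcal{C}_v(x^*)=1$ for all $v \in S$. If instead $\sigma(x^*) \subsetneq V$, the paper deletes a zero-coordinate vertex, observes that the truncated $x^*$ still solves the LCP of the smaller graph, and invokes the induction hypothesis. Your approach collapses this two-case induction into a single direct step: rather than reducing to the full-support case, you simply take $S^*$ to be a maximum weighted independent set of the induced subgraph $G[\sigma(x)]$. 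This guarantees $S^* \subseteq \sigma(x)$ by construction, so the identity $((A+I)x)_v = 1$ holds on all of $S^*$ without any case analysis, and the final comparison $\alpha_w(G[\sigma(x)]) \leq \alpha_w(G)$ is immediate. Your last paragraph identifies precisely why this choice is the right one. The result is a genuinely shorter and more transparent proof than the paper's, at no loss of generality.

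One cosmetic remark: where you write ``maximality of $S^*$'' you mean the maximum-weight property; and for the easy direction, note that a maximum weighted independent set need not itself be maximal when some weights are zero, but it can always be extended to a maximal one without decreasing $w(S^*)$, which is all you need.
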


To note why the above result is not obvious, consider the quantity $\beta(G)$ defined as the smallest size of a maximal independent set in $G$ (also known as minimum independent dominating set). One can show that the characteristic vector of every maximal independent set solves $\LCP(G)$ (Lemma \ref{lem:integer solutions of lcpg} in the next section). Hence, analogous to \eqref{eq:alpha inequality}, 
\begin{equation}
\beta(G) \geq \min \{\bfe \t x \mid x\ {\rm solves}\ \LCP(A+I,-\bfe)  \}. \label{eq:beta inequality} 
\end{equation}
We show in Section \ref{sec:minimum l1 norm of solg} that this inequality is in general strict, however, equality is achieved when the graph is a forest (\ie, graph that is a union of disjoint trees). Indeed we have the following theorem.

\begin{theorem}
\label{thm:forest min l1 beta}
For a forest $G$,
$$\beta(G) = \min \{\bfe \t x \mid x {\rm \;solves \;} \LCP(A+I,-\bfe)\},$$
where $A$ is the adjacency matrix of $G$, $I$ is the identity matrix and $\bfe$ is the vector of 1's in $\Real^n$.
\end{theorem}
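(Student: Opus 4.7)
The plan is to establish the reverse of inequality \eqref{eq:beta inequality}, namely that for every $x \in \SOL(\LCP(A+I,-\bfe))$ with $G$ a forest, there exists a maximal independent set $T$ of $G$ with $|T| \leq \bfe^\top x$. Taking the minimum over $x$ then yields $\beta(G) \leq \min\{\bfe^\top x : x \in \SOL(\LCP(A+I,-\bfe))\}$, which together with \eqref{eq:beta inequality} gives the theorem. The construction of $T$ will proceed by strong induction on $|V(G)|$, with the base case $|V(G)| \leq 1$ being immediate.

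For the inductive step I branch on the structure of $G$. If $G$ has an isolated vertex $v$, the $v$-row of $\LCP(G)$ forces $x_v = 1$; I include $v$ in $T$ and apply the induction hypothesis to $G - v$, to which $x$ restricts directly. Otherwise, $G$ possesses a leaf $v$ with unique neighbor $u$, and feasibility $x_v + x_u \geq 1$ rules out both $x_v$ and $x_u$ being zero. When $x_u > 0$, I include $u$ in $T$ and recurse on $G' := G - N[u]$. When $x_u = 0$, feasibility at $v$ together with complementarity at $v$ forces $x_v = 1$, and I include $v$ in $T$ and recurse on $G'' := G - \{v,u\}$. In each case the added vertex has no neighbors in the reduced graph on which the IH is invoked, so independence of $T$ is preserved; maximality of $T$ follows because any removed vertex is either the added vertex itself, adjacent to it, or already dominated by the recursively obtained IS.

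The crux is verifying that the restricted $x$ remains a solution of the reduced LCP with $\bfe^\top x$ dropping by exactly one to match the unit growth of $T$. When $x_u > 0$, complementarity at $u$ yields $\sum_{k \in N[u]} x_k = 1$, providing the required decrement. Moreover, combined with $x_v + x_u \geq 1$, this equation forces $x_k = 0$ for every $k \in N(u) \setminus \{v\}$: either $x_v > 0$ and then $x_v + x_u = 1$ by complementarity at $v$, leaving no room for other terms in $\sum_{k \in N[u]} x_k = 1$, or else $x_v = 0$ and $x_u = 1$ alone exhausts the sum. Consequently, for every $w \in V(G')$, all neighbors of $w$ lying in $N[u]$ carry $x$-value zero (note $v \notin N(w)$, since $v$'s sole neighbor is $u \in N[u]$), so the $w$-row of $\LCP(G')$ coincides with that of $\LCP(G)$ and both feasibility and complementarity transfer intact. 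The $x_u = 0$ branch is analogous but simpler: since $x_u = 0$, dropping $u$ subtracts nothing from any neighbor sum.

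The main obstacle is this restriction step, and the forest hypothesis enters essentially here. The uniqueness of the leaf's neighbor collapses complementarity at $u$ into an equation with only two potentially non-vanishing terms, pinning the $x$-values on $N(u) \setminus \{v\}$ at zero. On a graph containing cycles, $u$ could have several positive-weight neighbors so that mass can leak from $N[u]$ into $G - N[u]$ and the reduction breaks — consistent with the strictness of \eqref{eq:beta inequality} in general, as discussed in Section~\ref{sec:minimum l1 norm of solg}.
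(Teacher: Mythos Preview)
Your proof is correct. It takes a different route from the paper's, so a brief comparison is in order.

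The paper proceeds structurally: it first shows (Lemma~\ref{lem:structure of SOL}\ref{lem:forest SOL}) that for any $x\in\SOL(G)$ with $G$ a forest, the induced subgraph $G_{\sigma(x)}$ is a disjoint union of $K_1$'s and $K_2$'s, which immediately gives $\bfe\t x = |V(G_1)| + \tfrac12|V(G_2)|$. It then constructs (Lemma~\ref{lem:forest existence of MIS}) a maximal independent set of $G$ inside $\sigma(x)$ by picking all $K_1$ vertices and one endpoint from each $K_2$, yielding a set of exactly this cardinality. Your argument bypasses both lemmas with a direct leaf-peeling induction: at each step you remove a closed neighbourhood (or a leaf-plus-parent pair) on which the LCP equations force the $x$-mass to equal~$1$, add one vertex to $T$, and verify that the restricted $x$ solves the smaller LCP unchanged. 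The forest hypothesis enters for you through the existence of a leaf, whereas in the paper it enters through the acyclicity bound on edges in $G_W$.

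What each approach buys: the paper's route yields auxiliary structural information about $\SOL(G)$ for forests that is reused elsewhere (e.g.\ the well-coveredness result), and it makes transparent that $\bfe\t x$ is always an integer for forest solutions. Your induction is more self-contained and elementary, needing none of the preparatory lemmas; it also makes the exact equality $|T|=\bfe\t x$ explicit step by step, since each peel removes mass exactly one. Both arguments ultimately show the same pointwise statement (every $x\in\SOL(G)$ admits a maximal independent set of size $\bfe\t x$), so neither is strictly stronger.
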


\subsection{Contributions} \label{sec:contributions} 

Our main contribution in this paper are centered around Theorem \ref{thm:main} and its consequences. We also consider the analogous problem of the minimum $\ell_1$ norm of points in $\SOL(G)$ and its relation to $\beta(G).$ As mentioned above, unlike for the \textit{maximum} $\ell_1$ (\ie Theorem~\ref{thm:main}), the inequality in~\eqref{eq:beta inequality} is in general strict; in fact the right hand side in \eqref{eq:beta inequality} need not even be an integer. However, as indicated by Theorem \ref{thm:forest min l1 beta}, we show that this inequality is tight for forest graphs.

We perform a semidefinite programming~(SDP) relaxation for the optimization problem resulting from Theorem \ref{thm:main} to give a tighter version of the Lov\'{a}sz theta~\cite{lovasz1991cones}. The optimization problem in Theorem~\ref{thm:main} results in a more compact integer linear program~(ILP) formulation for $\alpha(G)$ than previous edge-based ILP formulations. The feasible lattice of this new ILP characterizes only the maximal independent sets of the graph. An application of lift-and-project relaxations gives our improved Lov\'{a}sz theta variant. Numerically we have verified that our variant is in general stronger than other Lov\'{a}sz theta variants that employ the same number of constraints.

Graphs for which all maximal independent sets are of the same cardinality are called \textit{well-covered} graphs~\cite{plummer1993well}. Using Theorem~\ref{thm:main} and Theorem~\ref{thm:forest min l1 beta} we derive a new characterization for the well-coveredness for forests: specifically, a forest $G$ is well-covered if and only if $\bfe \t x$ is constant for $x \in \SOL(G)$ (here too the ``if'' direction is easy to see; the ``only if'' needs a proof).

Theorem~\ref{thm:main} gives a characterization of the weighted independence number of a graph via a linear program with (linear) complementarity constraints~(LPCC). An LPCC in its most general form is written as,
\begin{equation*}
\label{eq:LPCC example}
	\maxproblem{LPCC}
	{x,y}
	{c \t x + d \t y}
				 {\begin{array}{r@{\ }c@{\ }l}
					 Bx+Cy & \geq  & b, \\
Mx + Ny +q &\geq & 0, \\
	x &\geq& 0, \\
	x \t (Mx+Ny+q) &=& 0.
	\end{array}}
\end{equation*}
Notice that a feasible pair $(x,y)$ for the LPCC comprises of an  $x$, that solves $\LCP(M,Ny+q)$, and another variable $y$ that parametrizes this LCP, and the pair $(x,y)$ must also satisfy an additional affine constraint $Bx+Cy\geq b.$ Clearly, taking $N,C,B,b,d$ to be 0 vectors or matrices of appropriate dimension gives a special case of the LPCC in which a linear function is maximized over the solution set of an LCP. This precisely the structure of Theorem~\ref{thm:main}.

LPCCs generalize several problem classes including linear programming, and finding sparse (minimum $\ell_0$ norm) solutions of linear equations. Their study is gathering momentum in the operations research literature~\cite{hu2012lpcc, hu2012linear}  as new applications get discovered. Theorem~\ref{thm:main} reveals weighted independence number as another application. LCPs on the other hand are a widely and deeply studied problem class; see, \eg,~\cite{murty1988linear} and~\cite{cottle92linear}. 
Theorem \ref{thm:main} brings in the possibility of using results from the theory of LCPs and LPCCs to develop algorithms or bounds on the independence number.
Indeed our results are obtained by appealing to properties of LCPs.

Finally, the reduction of the independence number problem to an LPCC shows that for an LPCC with $n$ variables, it is NP-hard to approximate it within a factor of $n^{1-\epsilon}$ of its optimal value (assuming $P \neq NP$) even for a strong class of problems with only binary data. This follows from the fact that approximating the independence number is NP-hard due to a result by H{\aa}stad \cite{haastad1996clique}. 

\subsection{Organization of the paper}
The rest of the paper is organized as follows. Section~\ref{sec:prop} elaborates on a few properties of $\LCP(G)$ and recounts some background about LCPs. It is followed by the proof of Theorem \ref{thm:main} in Section~\ref{sec:LPCC formulation}. Section~\ref{sec:applications} derives results pertaining to our SDP relaxation, well-covered graphs and the complexity of LPCCs. The paper concludes in Section~\ref{sec:conc}.

\section{Preliminaries} \label{sec:prop} 

\subsection{Background on LCPs}
Much of what follows is standard and well-documented~\cite{cottle92linear}; we recount it here for the benefit of the reader. 
Linear complementarity problems arise naturally through the modeling of several problems in optimization and allied areas. As an example, consider a convex quadratic program:
\[	\problem{QP}
	{x}
	{\half x\t Q x + c \t x}
				 {\begin{array}{r@{\ }c@{\ }l}
					 Ax& \geq  & b, \quad : \lambda\\
	x &\geq& 0, \\
	\end{array}}\]
	where $Q$ is symmetric and positive semidefinite matrix, and $A$ and $b$ are a matrix and a vector of appropriate dimension. If $\lambda$ denotes the vector of Lagrange multipliers corresponding to the constraint ``$Ax \geq b$", from the Karush-Kuhn-Tucker conditions it is easy to derive that  $x$ solves QP if and only if there exists $\lambda$ such that, 
	\[\pmat{x \\ \lambda} \geq 0,\quad \pmat{Qx +c - A\t \lambda\\ Ax -b} \geq 0,\quad \pmat{x\\ \lambda }\t\pmat{Qx +c - A\t \lambda\\ Ax -b} =0. \]
	This is clearly an LCP in the $(x,\lambda)$-space. 
	
	Another, famous, example comes from Nash equilibria of two person games. Consider a simultaneous move game with two players and loss matrices $A, B \in \Real^{m\times n}$. A Nash equilibrium~\cite{nash51noncooperative} is a pair of vectors $(x^*,y^*) \in \Delta_n \times \Delta_m$ such that, 
	\[(x^*)\t A y^* \leq x \t A y^*, \quad \forall \;x \in \Delta_n, \qquad (x^*)\t B y^* \leq (x^*)\t B y,  \quad \forall \;y \in \Delta_m,\]
	where $\Delta_k$ is the probability simplex in $\Real^k, \Delta_k :=\{x \in \Real^k | \sum_i x_i =1, x\geq 0\}.$ Assuming $A,B$ have positive entries, by suitable transformations~(see, \eg, \cite[p.\ 6]{cottle92linear}), it can be shown that if $(x^*,y^*)$ is a Nash equilibrium, then $(x',y')$, where, 
\[x'=x^*/(x^*)\t By^* \qquad y'=y^*/(x^*)\t A y^*,\]	
	 solves $\LCP(M,q)$ with, 
\[M = \pmat{0 & A \\ B\t &0}, \qquad q = -\bfe,\]
where $\bfe$ denotes a vector of 1's in $\Real^{m+n}.$ Conversely, if $(x',y')$ solves  $\LCP(M,q)$ then $x^*= x'/(\sum_i x'_i)$ and $y^*=y'/\sum_j y'_j$ is a Nash equilibrium. More generally, certain equilibria of games involving coupled constraints~\cite{kulkarni09refinement} also reduce to LCPs. For more applications, we refer the reader to~\cite{cottle92linear}.

LCPs may have unique, finitely many, infinitely many or no solutions. In the case where it has a solution, we say that the LCP is solvable. LCPs with rational inputs are known to be NP-complete~\cite{chung1989np} (Theorem~\ref{thm:main} also yields this as a corollary). 
\textit{Without} the complementarity condition, \ie, requirement (3) in the definition of $\LCP(M,q)$, an LCP  amounts only to finding a feasible point for a set of linear  inequalities. Since the complementarity condition is equivalent to asking that `for all $i$, either $x_i=0$ or $(Mx+q)_i=0$', one can equivalently reformulate the LCP as asking for an $x$ that is feasible for at least one out of $2^n$ systems of linear inequalities. Specifically, if for some subset of indices $S \subseteq \{1,\hdots,n\}$, if $x\in \Real^n$ satisfies the following linear inequalities, 
\[x \geq 0,\ y =Mx+q \geq 0, \quad  x_j =0, \ \forall j\notin S {\rm\;\; and}\ \  y_j=0,\ \ \forall j \in S,\]
then clearly $x$ solves $\LCP(M,q)$. Conversely, if $x$ solves $\LCP(M,q)$ then one may take $S = \{i \mid x_i >0\}$ to verify the above inequalities. The hardness of an LCP arises from the exponential number of possible choices for $S.$ 
This also demonstrates that although an LCP is ostensibly a continuous optimization problem, it implicitly encodes a problem of combinatorial character.

Results on LCPs concern questions such as existence, uniqueness, and boundedness of solutions, and their stability to changes in the vector $q$, in addition to computation. A typical line of attack has been to characterize classes of matrices $M$ and vectors $q$ for which the $\LCP(M,q)$ has the desired properties. A vast variety of matrix classes have been analyzed; we refer the reader to~\cite{cottle92linear} more on this topic.

\subsection{$\LCP(A+I,-\bfe)$ and its properties}
For a graph $G = (V,E)$ with vertices $V=\{1,\hdots,n\}$ we now study a few properties of $\LCP(G)$, \ie, $\LCP(A+I,-\bfe)$ where $A$ is the adjacency matrix of $G$, $I$ is the $n\times n$ identity matrix, and $\bfe$ is the vector of 1's. We define the \textit{support} of a vector $x \in \Real^n$  as $$\sigma(x) := \{i \in V \mid x_i > 0\}.$$ For $S,S'\subseteq V$, we denote by $N_S(S'):= \{j \in S \mid \exists i \in S' \sthat a_{ij} = 1 \},$ 
the neighbourhood of set $S'$ relative to $S$. For a singleton $S' = \{u\}$ we denote it by $N_S(u)$ and the subscript is dropped if $S =V$. For $x \in \Real^{n}$, we define $\mathcal{C}(x) := (A+I)x$ and denote by $\scn{i}{x}$ the $i^{th}$ component of $\mathcal{C}(x)$. We call $\scn{i}{x}$ the \textit{sum of the closed neighbourhood} of $i$ with respect to $x$. Clearly,
\begin{equation}
 \scn{i}{x} := x_i + \underset{j \in V}\sum a_{ij}x_j = x_i + \underset{j \in N(i)}\sum x_j. \tag{2} \label{eq:sum of closed neighbourhoods} 
\end{equation}

Observe that vector $x$ solves $\LCP(G)$ is equivalent to
\begin{align}
&x \geq 0,  &&\qquad \Leftrightarrow 
\qquad x_i \geq 0, \quad\forall \;i\in\; V, \label{eq:lcpg1} \\
&(A_G + I)x \geq \bfe, &&\qquad \Leftrightarrow \qquad \scn{i}{x} \geq 1, \quad\forall \;i\in\; V, \label{eq:lcpg2} \\
&x \t \big((A+I)x - \bfe\big) = 0. &&\qquad \Leftrightarrow \qquad x_i (\scn{i}{x}-1)=0, \quad\forall \;i\in\; V. \label{eq:lcpg3} 
\end{align}

For the rest of the paper, the constraint $x_i=0{\rm\;or\;}\scn{i}{x}=1$ is called the complementarity constraint for vertex $i$. Note that $\LCP(G)$ may have fractional solutions. For example if $G$ is regular with degree $d$, then $x = \bfe/(d+1)$ solves $\LCP(G).$
We now study a few additional properties of the structure of the $\LCP(A+I,-\bfe)$. For a graph $G$, let $V(G)$ denote the vertex set of $G$, and for a set $S \subseteq V(G)$, let $G_S$ denote the subgraph induced by $S$. For a vector $x$ of size $|V(G)|$, denote by $x_S$ the corresponding subvector of $x$ indexed by vertices in $S$.

\begin{lemma}
\label{lem:basic lcpg results}
Consider a graph $G = (V,E)$ and the $\LCP(G) = \LCP(A+I,-\bfe)$, where $A$ is the adjacency matrix of $G$, $I$ is the $n \times n$ identity matrix, and $\bfe$ is the vector of 1's. Then,

\begin{enumerate}[label=(\alph*)]
\renewcommand{\theenumi}{$(\alph{enumi})$}
\item \label{lem:zero notin SOL}$0 \notin \SOL(G)$,
\item $\mathcal{C}(x) \geq x, \ \forall \; x \in \SOL(G)$,
\item \label{lem:sol subset of [0,1]} $\SOL(G) \subseteq [0,1]^n$,
\item \label{lem:disjoint}If a graph $G$ is a disjoint union of graphs $G_1$ and $G_2$, then, $\SOL(G) = \SOL(G_1)\times\SOL(G_2)$.
\item \label{lem:dominating}For a graph $G$, if $x \in \SOL(G)$, $\sigma(x)$ is a dominating set of $G$.
\item \label{lem:subvector solution}For a graph $G$, if $x \in \SOL(G)$, then $\hat{x}:=x_{\sigma(x)} \in \SOL(G_{\sigma(x)})$ and $\sigma(\hat{x}) = V(G_{\sigma(x)})$.
\end{enumerate}
\end{lemma}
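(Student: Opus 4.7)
The plan is to prove each of the six items by directly unpacking the reformulated LCP conditions \eqref{eq:lcpg1}--\eqref{eq:lcpg3}, with the only real ingredients being the nonnegativity of the adjacency matrix $A$ and the complementarity condition $x_i(\scn{i}{x}-1)=0$. No new machinery is required; each statement follows from a short computation once the right identity is invoked, so the proof will simply list (a)--(f) in order.

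Parts (a)--(c) are immediate. For (a), setting $x=0$ gives $(A+I)x-\bfe = -\bfe < 0$, violating \eqref{eq:lcpg2}. For (b), I would write $\mathcal{C}(x)-x = Ax$, which is componentwise nonnegative because $A \geq 0$ and $x \geq 0$. Part (c) then drops out of (b) together with complementarity: for each $i$, either $x_i=0$, in which case $x_i \in [0,1]$ trivially, or $\scn{i}{x}=1$, and then (b) gives $x_i \leq \scn{i}{x} = 1$. For (d), when $G$ is a disjoint union of $G_1$ and $G_2$ the adjacency matrix is block diagonal, so conditions \eqref{eq:lcpg1}--\eqref{eq:lcpg3} decouple into an independent copy for each component, which proves $\SOL(G)=\SOL(G_1)\times\SOL(G_2)$ by double inclusion.

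For (e), \eqref{eq:lcpg2} forces $\scn{i}{x} = x_i + \sum_{j \in N(i)} x_j \geq 1 > 0$ for every $i \in V$, so either $x_i>0$ (i.e.\ $i \in \sigma(x)$) or some neighbour $j \in N(i)$ has $x_j>0$ (i.e.\ $j \in \sigma(x)$). This is exactly the definition of $\sigma(x)$ being a dominating set. For (f), $\sigma(\hat{x}) = V(G_{\sigma(x)})$ is immediate from the definition of the support restriction. To show $\hat{x} \in \SOL(G_{\sigma(x)})$, let $A'$ denote the adjacency matrix of $G_{\sigma(x)}$; the key observation is that entries of $x$ outside $\sigma(x)$ vanish and therefore contribute nothing to any closed-neighbourhood sum. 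Hence for every $i \in \sigma(x)$, $((A'+I)\hat{x})_i = x_i + \sum_{j \in N(i) \cap \sigma(x)} x_j = \scn{i}{x}$, and since $x_i > 0$, complementarity in the original LCP forces this value to be exactly $1$. Positivity of $\hat{x}$, the feasibility inequality $(A'+I)\hat{x} \geq \bfe$, and complementarity for the restricted LCP then all hold automatically.

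The main obstacle, to the extent there is one, is the bookkeeping in (f): one has to check that restricting to $\sigma(x)$ is consistent with the adjacency matrix of the \emph{induced} subgraph (rather than with some other principal submatrix of $A$), which is what allows $\scn{i}{x}$ computed in $G$ to agree with the closed-neighbourhood sum computed inside $G_{\sigma(x)}$. Once this observation is isolated the argument collapses as above.
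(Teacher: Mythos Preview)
Your proposal is correct and follows essentially the same route as the paper: each item is handled by unpacking \eqref{eq:lcpg1}--\eqref{eq:lcpg3} together with the nonnegativity of $A$, with the block-diagonal observation for (d) and the ``zero entries contribute nothing to $\scn{i}{x}$'' observation for (f). The only cosmetic difference is in (e), where the paper compares $x$ to $\bfone_{\sigma(x)}$ via part (c) and then bounds $\scn{i}{\bfone_{\sigma(x)}}$, while you argue directly from $\scn{i}{x}\geq 1$; both yield the same conclusion immediately.
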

\begin{proof}
\begin{enumerate}[label=(\alph*)]
\item Clearly, $\scn{i}{0} = 0$ for all $i \in V$, which violates (3). Thus $0 \notin \SOL(G).$
\item Let $x \in \SOL(G)$, then $x_j \geq 0,\; \forall \; j \in V$. By definition $\scn{i}{x} = x_i + \sum_{j\in N(i)}x_j$. Hence $\scn{i}{x} \geq x_i$ for all $i \in V$ with equality occurring only when $x_i = 1$.
\item \label{lem:interval} Let $x \in \SOL(G)$ then $x_i \geq 0$. Now suppose $x_i > 1$ for some $i$, then $\scn{i}{x} > 1$ and (4) is violated. Hence $0 \leq x_i \leq 1$.
\item \label{lem:disjoint subgraphs} Let $A$, $A_1$ and $A_2$ be the adjacency matrices of $G$, $G_1$ and $G_2$ respectively. Let $x \in \SOL(G)$ and let for $i = 1,2$, $x^{(i)}$ and $\bfe^{(i)}$ respectively denote the subvectors of $x$ and $\bfe$ indexed by vertices in $G_i$. Observe that since $G$ is a disjoint union of two graphs, $A$ is a block diagonal matrix with $A_1$ and $A_2$ as diagonal blocks.

Since $x \in \SOL(G)$, we have $x \geq 0$, $(A+I)x \geq e$ and $x\t\big((A+I)x - \bfe\big)=0$. This means for $i = 1,2$, $x^{(i)} \geq 0$, $(A+I)x^{(i)} \geq e^{(i)}$ and $x^{(i)\top}\big((A+I)x^{(i)} - \bfe^{(i)}\big)=0$, whereby $x^{(i)} \in \SOL(G_i)$. Conversely, if $x^{(i)} \in \SOL(G_i)$ for $i=1,2$ then $x=(x^{(1)},x^{(2)}) \in \SOL(G).$
 This proves the lemma.
\item If $x \in \SOL(G)$, then by Lemma \ref{lem:basic lcpg results} \ref{lem:interval} we have, $0 \leq x_i \leq 1,\;\forall \;i \in V$. Hence $\bfone_{\sigma(x)} \geq x$ for all $i \in V$, whereby $\scn{i}{\bfone_{\sigma(x)}} \geq \scn{i}{x} \geq 1$. This means that every vertex not in $\sigma(x)$ has at least one neighbour in $\sigma(x)$. This proves that $\sigma(x)$ is a dominating set.
\item Let $x \in \SOL(G)$ and $\hat{x}:=x_{\sigma(x)}$, whereby $\hat{x}_i > 0$ for all $i \in \sigma(x)$. Hence for the graph $G_{\sigma(x)}$, $\sigma(\hat{x}) = V(G_{\sigma(x)}) = \sigma(x)$. Moreover, notice that for a vertex $i$ in $G_\sigma(x)$, the sum of closed neighbourhoods denoted by $\hat{\mathcal{C}}_i(\hat{x}) = \scn{i}{x}$, since all vertices $i \in V(G)\backslash\sigma(x)$ have $x_i=0$. Hence for $i$ in $V(G_\sigma(x))$ we have, $\hat{x}_i \geq 0$, $\hat{\mathcal{C}}_i(\hat{x}) \geq 1$ and $\hat{x}_i(\hat{\mathcal{C}}_i(\hat{x})-1) = x_i(\scn{i}{x}-1) = 0$. Hence $\hat{x} \in \SOL(G_{\sigma(x)})$.
\end{enumerate}
\end{proof}

We now study a property associated with the integer solutions of $\LCP(A+I,-\bfe)$.
\begin{lemma}
\label{lem:integer solutions of lcpg}
For a graph $G= (V,E)$, a vector $x$ is an integral solution of $\LCP(G)$ if and only if it is the characteristic vector of a maximal independent set of $G$.
\end{lemma}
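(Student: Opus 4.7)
The plan is to use the characterization of $\SOL(G)$ via the three conditions \eqref{eq:lcpg1}--\eqref{eq:lcpg3}, combined with the well-known fact that the maximal independent sets of a graph are exactly the independent dominating sets. I will also invoke Lemma~\ref{lem:basic lcpg results}\ref{lem:sol subset of [0,1]}, which forces an integral element of $\SOL(G)$ to be $\{0,1\}$-valued, hence the characteristic vector of some set $S \subseteq V$.

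For the forward direction, I would suppose $x = \bfone_S \in \SOL(G)$. From the complementarity condition \eqref{eq:lcpg3}, for every $i \in S$ we have $x_i = 1$, so $\scn{i}{x} = 1$; substituting into \eqref{eq:sum of closed neighbourhoods} gives $1 + |N(i) \cap S| = 1$, i.e., no vertex in $S$ has a neighbour in $S$, so $S$ is independent. From the feasibility condition \eqref{eq:lcpg2}, for every $i \notin S$ we have $|N(i) \cap S| = \scn{i}{x} \geq 1$, so $S$ is also dominating. An independent dominating set is maximal independent, concluding this direction. (Alternatively, one may cite Lemma~\ref{lem:basic lcpg results}\ref{lem:dominating} directly for the dominating property.)

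For the reverse direction, I would take a maximal independent set $S$ and verify the three conditions for $x := \bfone_S$ directly. Nonnegativity is trivial. For each $i \in S$ the independence of $S$ gives $\scn{i}{x} = 1$, which handles both feasibility and complementarity at such $i$. For each $i \notin S$, complementarity holds trivially since $x_i = 0$; for feasibility, maximality of $S$ implies $i$ has a neighbour in $S$, so $\scn{i}{x} \geq 1$.

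The proof is essentially a direct translation of the LCP conditions into combinatorial language, so I do not anticipate any real obstacle. The only subtlety is that one must remember to use Lemma~\ref{lem:basic lcpg results}\ref{lem:sol subset of [0,1]} to rule out integer entries larger than $1$ before identifying integral solutions with characteristic vectors.
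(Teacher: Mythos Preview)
Your proposal is correct and follows essentially the same approach as the paper: both invoke Lemma~\ref{lem:basic lcpg results}\ref{lem:sol subset of [0,1]} to reduce to $\{0,1\}$-vectors, then translate the complementarity condition into independence of $S$ and the feasibility condition $(A+I)x\geq\bfe$ into the domination/maximality property. The only cosmetic difference is that the paper packages the independence step via the identity $\bfone_S^\top\big((A+I)\bfone_S-\bfe\big)=\bfone_S^\top A\,\bfone_S=\sum_{i,j\in S}a_{ij}$ and argues both directions simultaneously, whereas you argue vertex-by-vertex and treat the two implications separately.
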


\begin{proof}
From Lemma \ref{lem:basic lcpg results} \ref{lem:sol subset of [0,1]}, we know that integer solution to $\LCP(G)$ is necessarily a binary vector and hence it is the characteristic vector of some set contained in $V$.

Consider such a binary vector $\bfone_S$ for some set $S \subseteq V$. It always satisfies $\bfone_S \geq 0$. We first show that $\bfone_S$ satisfying the complementarity constraint \eqref{eq:lcpg3} is equivalent to $S$ being an independent set. Next, we show that, if $S$ is independent, then $\bfone_S$ satisfying \eqref{eq:lcpg2} is equivalent to $S$ being a maximal independent set. These claims together complete the proof.

First we note that $S\subseteq V$ is an independent set if an only if  the sum $\sum_{i \in S}\sum_{j \in S}a_{ij} = 0$: If $S$ is an independent set then $a_{ij} = 0$ for all $i,j\in S$ and hence  this sum is 0. 
Conversely, if this sum vanishes, then all the terms $a_{ij}$ appearing in it, being non-negative are necessarily zero whereby $S$ is an independent set. Observe that this sum is in fact $\sum_{i \in V}\sum_{j \in V}a_{ij}(\bfone_{S})_i(\bfone_{S})_j = \bfone_S \t  A \bfone_S$. Since $\bfone_S$ is binary, $\bfone_S \t \bfone_S = \bfe \t \bfone_S$ and hence $\bfone_S((A+I)\bfone_S - \bfe) = \bfone_S \t  A \bfone_S$. Hence, 
$$S {\rm \;is \;independent} \qquad \iff \qquad \bfone_S \t ((A+I) \bfone_S-\bfe) = 0 \qquad \iff \qquad \bfone_S\ {\rm satisfies \ } (5).$$
Finally, if $S$ is an independent set, then $\scn{i}{\bfone_S} = 1, \; \forall \;i\in S$. Moreover $\scn{i}{\bfone_S} \geq 1$ for all $i \notin S$ means every vertex not in $S$ has at least one neighbour in $S$. Recall that this is a property of maximal independent sets. Hence,
$${\rm If\;} S {\rm\;is\;independent\ and}\ (A+I)\bfone_S\geq \bfe \iff S {\rm\;is\;a \ maximal\ independent\ set}.$$
This concludes the proof of the lemma.
\end{proof}

As a consequence of Lemma \ref{lem:integer solutions of lcpg}, we have,
$$\alpha(G) = \max\;\{\bfe \t x \mid x\in \{0,1\}^n \cap \SOL(G)\} \qquad {\rm and}\qquad \beta(G) =\min\;\{\bfe \t x \mid x\in \{0,1\}^n \cap \SOL(G)\}.$$ 
The next lemma provides an upper bound on the $\ell_1$-norm of a solution of $\LCP(A+I,-\bfe)$ if the support of the solution contains a maximal independent set.

\begin{lemma}
\label{lem:support contains MIS}
If a maximal independent set of a graph $G$ is contained in the support of a solution to $\LCP(G)$, then the $\ell_1$ norm of the solution is upper bound by the cardinality of the set.
\end{lemma}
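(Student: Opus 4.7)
The plan is to exploit the complementarity condition on the vertices of the maximal independent set $S$ together with the maximality property that every vertex outside $S$ has at least one neighbour in $S$.

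Specifically, let $x \in \SOL(G)$ and let $S$ be a maximal independent set with $S \subseteq \sigma(x)$. The first step is to observe that for every $i \in S$, since $x_i > 0$, the complementarity condition \eqref{eq:lcpg3} forces $\scn{i}{x} = 1$, whence
\[
\sum_{i \in S} \scn{i}{x} \;=\; |S|.
\]
The second step is to expand the left-hand side using \eqref{eq:sum of closed neighbourhoods} and swap the order of summation:
\[
\sum_{i \in S} \scn{i}{x} \;=\; \sum_{i \in S} x_i \;+\; \sum_{i \in S} \sum_{j \in N(i)} x_j \;=\; \sum_{i \in S} x_i \;+\; \sum_{j \in V\setminus S} |N(j) \cap S|\, x_j,
\]
where the last equality uses that $S$ is independent, so $N(i) \subseteq V\setminus S$ for every $i \in S$.

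The third step invokes the maximality of $S$: every $j \in V\setminus S$ has at least one neighbour in $S$, i.e.\ $|N(j) \cap S| \geq 1$. Combined with $x_j \geq 0$ from \eqref{eq:lcpg1}, this yields
\[
\sum_{i \in S} \scn{i}{x} \;\geq\; \sum_{i \in S} x_i \;+\; \sum_{j \in V\setminus S} x_j \;=\; \bfe\t x.
\]
Chaining the two displays gives $\bfe\t x \leq |S|$, which is the claim. There is no real obstacle here; the only subtlety worth flagging explicitly is that the containment $S \subseteq \sigma(x)$ is used solely to activate the equality $\scn{i}{x} = 1$ on $S$ via complementarity, while maximality of $S$ is used to cover the contribution of vertices outside $S$.
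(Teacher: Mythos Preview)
Your proof is correct. Both you and the paper use the same core mechanism --- summing the equalities $\scn{i}{x}=1$ furnished by complementarity and then invoking maximality of $S$ to bound the cross terms --- but you make a cleaner choice of summation set. The paper sums $\scn{i}{x}=1$ over all of $\sigma(x)$, which forces it to introduce $U=\sigma(x)\setminus S$, split the resulting double sum, and then apply complementarity a second time on $U$ to cancel the extra $|U|$ term before reaching $\sum_{j\in U}(|N_S(j)|-1)x_j\geq 0$. By summing only over $S$, you never see $U$ at all: the swap of summation and the bound $|N(j)\cap S|\geq 1$ for $j\in V\setminus S$ deliver $\bfe\t x\leq |S|$ in one stroke. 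Your route is shorter and isolates exactly where each hypothesis (containment $S\subseteq\sigma(x)$ for complementarity on $S$; maximality of $S$ for the coverage bound) enters.
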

\begin{proof}
Let $x$ be a solution of the $\LCP(G)$ such that $\exists \; S \subset \sigma(x)$, and $S$ is a maximal independent set. We have to show $\bfe \t x \leq |S|$. Let $U := \sigma(x) \backslash S$.  Then,  $\forall\; i \in \sigma(x),$
$$ \quad \scn{i}{x} = \sum_{j \in V}a_{ij}x_j + x_i = 1.$$
Summing over $i\in \sigma(x)$ gives,
\[ \bfe \t x + \sum_{i \in \sigma(x)}\sum_{j\in V}a_{ij}x_j = |\sigma(x)| = |S| + |U|.\]
Thus,
\begin{eqnarray}
|S| - \bfe \t x
&\stackrel{(a)}{=}& \sum_{i\in S}\sum_{j\in U} a_{ij} x_j + \sum_{i\in U} \sum_ j a_{ij} x_j - |U|. \nonumber\\
&\stackrel{(b)}{=}& \sum_{j \in U} |N_S(j)|x_j + \sum_{i\in U} \sum_ j a_{ij} x_j - |U|. \nonumber\\
&\stackrel{(c)}{=}& \sum_{j \in U} |N_S(j)|x_j - \sum_{i\in U} x_i. \nonumber
\end{eqnarray}
The equality in $(a)$ follows from splitting the first summation and applying $a_{ij} = 0$ for all $i,j \in S$. To obtain the equality in $(b)$ the order of summation in the first term is interchanged. Equality $(c)$ is obtained by adding constraints $\scn{i}{x} = 1, \; \forall \, i \in U$.

Recall that for a maximal independent set $S$, every vertex not in $S$ has at least 1 neighbour in $S$. Hence $|N_S(x)| \geq 1$ for all $i \in U$. Hence we have,
$$\bfe \t x \leq |S|, \quad \forall\; S \subseteq \sigma(x), \;{\rm such \;that}\; S {\rm\; is\; maximally \;independent}.$$
This proves the lemma.
\end{proof}

Lemma \ref{lem:support contains MIS} describes an upper bound for solutions containing a maximal independent set in their support. If the graph $G$ is a \textit{forest}, i.e., a collection of trees then for every solution of the $\LCP(G)$ there exists a maximal independent set in its support. This is proved later in Lemma \ref{lem:forest existence of MIS}.

The following lemma states a few results regarding $\SOL(G)$ when $G$ belongs to a few specific classes of graphs namely regular graphs, cliques and trees. 

\begin{lemma}
\label{lem:structure of SOL}
\begin{enumerate}[label=(\alph*)]
\item For a complete graph $K_n$ over $n$ vertices, $\SOL(K_n) = \Delta_n := \{x \geq 0 \mid \bfe \t x = 1\}$.
\item \label{lem:tree SOL}For a forest $G$, if $x \in \SOL(G)$ and $\sigma(x) = V(G)$ then $G$ is a disjoint union of $K_1$ or $K_2$.
\item \label{lem:forest SOL}For a forest $G$, if $x \in \SOL(G)$ then $G_{\sigma(x)}$ is a union of $K_1$'s and $K_2$'s.
\item \label{lem:Regular graphs beta bound}For a regular graph $R_{n,d}$ over $n$ vertices with degree $d$, $\beta(R_{n,d}) \geq \frac{n}{d+1}$.

\end{enumerate}

\end{lemma}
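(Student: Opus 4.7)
The plan is to dispatch the four parts in order, leveraging the complementarity condition \eqref{eq:lcpg3} and the structural results already established in Lemma~\ref{lem:basic lcpg results}.

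For part (a), observe that for the complete graph $K_n$ the adjacency matrix satisfies $A+I = J$, where $J$ is the all-ones matrix. Thus $\mathcal{C}(x) = (\bfe\t x)\bfe$ for every $x$. The inequality constraint $\mathcal{C}(x)\geq \bfe$ simplifies to $\bfe\t x \geq 1$, and the complementarity condition becomes $x_i(\bfe\t x - 1) = 0$ for every $i$. If $\bfe\t x > 1$ were to hold, all $x_i$ would have to vanish, a contradiction. Hence $\bfe\t x = 1$, and $\SOL(K_n) = \Delta_n$.

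For part (b), suppose $G$ is a forest with $x \in \SOL(G)$ and $\sigma(x) = V(G)$, so by \eqref{eq:lcpg3} we have $\scn{i}{x} = x_i + \sum_{j \in N(i)} x_j = 1$ for \emph{every} vertex $i$. It suffices to show that every connected component (tree) $T$ of $G$ has at most two vertices. If $T = K_1$ with vertex $i$, the equation reads $x_i = 1$, so we are done. Otherwise $T$ has a leaf $v$ with unique neighbour $u$, and from the equation at $v$ we get $x_v + x_u = 1$. Subtracting this from the equation at $u$, namely $x_u + x_v + \sum_{j \in N(u)\setminus\{v\}} x_j = 1$, yields $\sum_{j \in N(u)\setminus\{v\}} x_j = 0$. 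Since every $x_j$ with $j \in V(G)$ is strictly positive, this forces $N(u) = \{v\}$, so $T$ is exactly the edge $\{u,v\} \cong K_2$. This is the heart of the argument and I expect to be the only delicate step.

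For part (c), apply Lemma~\ref{lem:basic lcpg results}\ref{lem:subvector solution} to conclude that the restriction $\hat x := x_{\sigma(x)}$ solves $\LCP(G_{\sigma(x)})$ with $\sigma(\hat x) = V(G_{\sigma(x)})$. Since a subgraph of a forest is again a forest, part (b) applied to $G_{\sigma(x)}$ yields that $G_{\sigma(x)}$ is a disjoint union of $K_1$'s and $K_2$'s.

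For part (d), a direct counting argument suffices; no appeal to LCP theory is strictly needed. Let $S$ be any maximal independent set in $R_{n,d}$. By maximality, $S$ is a dominating set, so $V = S \cup \bigcup_{v \in S} N(v)$. Each $v \in S$ has exactly $d$ neighbours, hence $\bigl|\bigcup_{v \in S} N(v)\bigr| \leq d|S|$, and consequently $n \leq |S| + d|S| = (d+1)|S|$. Taking $S$ to be a smallest maximal independent set gives $\beta(R_{n,d}) \geq n/(d+1)$. (Alternatively, summing $\scn{i}{x} \geq 1$ over all $i$ and using $d$-regularity shows $(d+1)\bfe\t x \geq n$ for every $x \in \SOL(G)$, which combined with Lemma~\ref{lem:integer solutions of lcpg} yields the same bound.)
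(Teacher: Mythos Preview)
Your argument is correct. Parts (a)--(c) follow essentially the same route as the paper: in (a) you exploit $A+I=\bfe\bfe\t$ (you should, strictly speaking, also note the easy reverse inclusion $\Delta_n\subseteq\SOL(K_n)$, but this is immediate since $\mathcal{C}(x)=\bfe$ for $x\in\Delta_n$); in (b) both you and the paper compare the closed-neighbourhood equations at a leaf and its unique neighbour to force the tree to be $K_2$; and (c) is the same reduction via Lemma~\ref{lem:basic lcpg results}\ref{lem:subvector solution}.

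The only genuine methodological difference is in part (d). The paper argues within the LCP framework: it observes that $\frac{\bfe}{d+1}\in\SOL(R_{n,d})$ has full support, so every maximal independent set $S$ is contained in its support, and then invokes Lemma~\ref{lem:support contains MIS} to obtain $\frac{n}{d+1}=\bfe\t\frac{\bfe}{d+1}\leq|S|$. Your main argument is a direct combinatorial count (a maximal independent set dominates, and each vertex dominates at most $d+1$ vertices), which is more elementary and self-contained. The paper's route, by contrast, illustrates the utility of Lemma~\ref{lem:support contains MIS} and ties the bound back to the LCP solution structure---which is thematically in line with the rest of the paper. Your parenthetical alternative (summing $\scn{i}{x}\geq 1$ over $i$) is yet a third proof, and arguably the slickest of the three.
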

\begin{proof}
\begin{enumerate}[label=(\alph*)]
\item For a complete graph $A+I = \bfe \bfe \t$, the matrix of all ones. Let $x \in \SOL(K_n)$, then the complementarity constraint simplifies to $x\t\bfe (\bfe\t x - 1) = 0$. Since $x \geq 0$ and $0 \notin \SOL(K_n)$ by Lemma \ref{lem:basic lcpg results} \ref{lem:zero notin SOL}, $x \t \bfe > 0$. This implies $x \in \Delta_n$. 

Observe that if the graph is $K_n$, $\scn{i}{x} = \bfe \t x$ for all $x \in \Real^n$. Let $x \in \Delta_n$, then $x \geq 0$, $\mathcal{C}(x) = \bfe$, whereby $x \in \SOL(K_n)$. Hence $\SOL(K_n) = \Delta_n$.

\item We first show that if the graph is a tree and there exists a solution to the $\LCP$ with full support, then the tree must be either $K_1$ or $K_2$. Consider a tree $T\neq K_1$ or $K_2$, and let $x \in \SOL(T)$ with $\sigma(x) = V(T)$, i.e., $x_i >0$ for all $i\in V(T)$. Then $\scn{i}{x} = 1$ for all $i$ due to the complementarity constraint. Consider a leaf vertex $i^*$ of $T$ and its neighbour $j^*$. Since $T \neq K_2$, degree of $j^* \geq 2$. Hence $\big{\{}N(i^*)\cup\{i^*\}\big{\}} \subset \big{\{}N(j^*)\cup\{j^*\}\big{\}}$, a strict subset. Hence we have, $\scn{i^*}{x} < \scn{j^*}{x}$ which is a contradiction. This proves the claim.

Now consider a forest $G$, and let $T^{(i)}$ be its $i^{\rm th}$ connected component. Let $x \in \SOL(G)$ such that $\sigma(x) = V(G)$, and let $x^{(i)}$ denote the subvector of $x$ indexed by vertices in $T^{(i)}$. By Lemma \ref{lem:basic lcpg results} \ref{lem:disjoint subgraphs}, we know that $x^{(i)} \in \SOL(T^{(i)})$. Moreover $\sigma(x^{(i)}) = V(T^{(i)})$ for all components $T^{(i)}$ of $G$. Hence $T^{(i)}$ is either $K_1$ or $K_2$, and \ref{lem:tree SOL} stands proven.

\item For a forest $G$, let $x \in \SOL(G)$. From Lemma \ref{lem:basic lcpg results} \ref{lem:subvector solution} we have, $x_{\sigma(x)} \in \SOL(G_{\sigma(x)})$ and $\sigma(x_{\sigma(x)}) = V(G_{\sigma(x)})$. Observe that $G_{\sigma(x)}$ is also a forest. Hence it follows from Lemma \ref{lem:structure of SOL} \ref{lem:tree SOL}, that $G_{\sigma(x)}$ is a union of $K_1$s and $K_2$s.

\item For a regular graph, $(d,\bfe)$ is an eigenvalue-eigenvector pair of the adjacency matrix. Hence $\frac{\bfe}{d+1} \in \SOL(R_{n,d})$ with $\sigma(\frac{\bfe}{d+1}) =V(R_{n,d})$ and $\bfe \t \frac{\bfe}{d+1} = \frac{n}{d+1}$. Using Lemma \ref{lem:support contains MIS} proves \ref{lem:Regular graphs beta bound} since $\beta(R_{n,d})$ is the cardinality of the smallest maximal independent set of $R_{n,d}$.
\end{enumerate}
\end{proof}

\section{Main results}
\label{sec:LPCC formulation}
\subsection{Proof of Theorem \ref{thm:main}}

For a vector of non-negative\footnote{It can be easily shown that for unconstrained $w$, $\alpha_w(G) = \alpha_{w^+}(G_+)$ where $G_+$ is the subgraph of $G$ over vertices with non-negative weights $w^+$. Thus we only consider non-negative weight vectors $w$ for the rest of the paper.} weights $w \in \Real^n$, let,
\begin{equation}
\label{eq:main LPCC weighted}
M_w(G) := \max \;\{ w \t  x \mid x {\rm \;solves\;} \LCP(A + I, -\bfe)\}.
\end{equation}
To reiterate the statement of the theorem -- \textit{For a simple graph $G$,}
\begin{equation*}
\alpha_w(G) = M_w(G).
\end{equation*}

\begin{proof}[of Theorem~\ref{thm:main}]
We prove Theorem \ref{thm:main} by showing inequalities in both directions. From Lemma \ref{lem:integer solutions of lcpg}, for a simple graph $G$, the characteristic vector of every maximal independent set is a solution to the $\LCP(G)$. The maximum weighted independent set $S^*$ being a maximal independent set\footnote{This is true only since $w_i\geq 0$. One can easily construct a graph with unconstrained vertex weights such that the maximum weighted independent set is not a maximal independent set.} gives a feasible vector for the maximization problem \eqref{eq:main LPCC weighted}. Hence,
$$\alpha_w(G) = w \t \bfone_{S^*} \leq M_w(G).$$

We show $\alpha_w(G) \geq M_w(G)$ by induction on the number of vertices $n$ of $G$. For the graph $G_1$ consisting of a single vertex, the adjacency matrix is the scalar 0 and $\SOL(G_1) = \{1\}$. Thus the statement holds for the base case $n = 1$. 

Let us assume the induction hypothesis for all graphs with $n < k$ vertices, i.e., $$\alpha_w(G) \geq M_w(G), \; \forall \; G=(V,E), \;{\rm such\;that\;} |V| < k \aur \forall\;w\geq 0.$$ Let $G^* = (V^*,E^*)$ be a graph with $k$ vertices labelled $V^* = \{1,2,\cdots k\}$. Let $x^* \in \SOL(G^*)$ be the maximizer of \eqref{eq:main LPCC weighted}, i.e., $M_w(G^*) = w \t x^*.$
\\
\\
\noindent \textbf{Case I}: $\sigma(x^*) = V^*$. Let the maximum weighted independent set be $S$, i.e., $\alpha_w(G) = \sum_{i \in S}w_i$. Let $S^c = V^*\backslash S$ be its complement. The complementarity constraint on $x^*$ dictates $(A+I)x^* = \bfe$, i.e., $\forall\; i \in V,$
\begin{equation}
\scn{i}{x^*}=\sum_{j \in V}a_{ij}x^*_j + x^*_i = 1.  \label{eq:temp1} 
\end{equation}
Hence,
\begin{eqnarray}
M_w(G^*) = \sum_{i\in S^c}w_ix^*_i + \sum_{i\in S}w_ix^*_i &\stackrel{(d)}{=}& \sum_{i \in S^c} w_i x^*_i + \sum_{i \in S}w_i  - \sum_{i\in S}\sum_{j\in V} w_i a_{ij} x^*_j, \nonumber\\
&\stackrel{(e)}{=}& \alpha_w(G^*) + \sum_{j \in S^c} w_j x^*_j  -  \sum_{i\in S}\sum_{j \in S^c} w_i a_{ij} x^*_j, \nonumber\\
&=& \alpha_w(G^*) - \sum_{j \in S^c} x^*_j \cdot \left (\sum_{i \in S} w_i a_{ij} - w_j\right ). \label{eq:temp2} 
\end{eqnarray}
Here $(d)$ is obtained by multiplying each equation  \eqref{eq:temp1} by $w_i$ and adding these equations for $i \in S$, and then substituting the resulting resulting expression for $\sum_{i\in S} w_i x_i^*$.
$(e)$ follows from using  $a_{ij} = 0$ for $ i,j \in S$ and $\sum_{i \in S}w_i = \alpha_w(G^*)$. 

We now show an intermediate inequality $ \sum_{i \in S} w_i a_{ij} - w_j \geq 0, \; \forall \;j \in S^c$. To prove this suppose the contrary holds for some $j^* \in S^c$, i.e., $ \sum_{i \in S} w_i a_{ij^*} - w_{j^*} < 0 $. Now consider the set $S' = \{S \backslash N_S(j^*)\}  \cup \{j^*\}$. Clearly $S'$ is an independent set. Moreover, the weight of $S'$ is greater than the weight of $S$ by $w_{j^*} - \sum_{i \in S} w_i a_{ij^*}$, a positive quantity, by assumption. This contradicts that $S$ is a weighted maximum independent set. Hence $ \sum_{i \in S} w_i a_{ij} - w_j \geq 0, \; \forall \;j \in S^c$ whereby, from \eqref{eq:temp2},
$$M_w(G^*) \leq \alpha_w(G^*),$$
as required.
\\
\noindent \textbf{Case II}: $\sigma(x^*) \subset V^*$, a strict subset. Let $x^*_k = 0$ without loss of generality. Let $G_U^*$ be the subgraph of $G^*$ induced by $U := V^*\backslash\{k\}$. Let $y,\widetilde{w}$ be vectors in $\Real^{k-1}$ such that $y_i = x^*_i $ and $ \widetilde{w}_i = w_i$ for all $i \in U$. Clearly $y \geq 0$. Also for all $i \in U$,

$$\scn{i}{y} = \sum_{j \in U} a_{ij}y_j + y_i = \sum_{j \in U} a_{ij}x^*_j + x^*_i = \sum_{j \in V^*} a_{ij}x^*_j + x^*_i = \scn{i}{x^*}\geq 1  .$$
Moreover for all $i \in U$,
$$y_i(\scn{i}{y} - 1) = x^*_i(\scn{i}{x^*} - 1) \stackrel{(f)}{=}0,$$
where $(f)$ follows due to $x^* \in \SOL(G^*)$. Hence $y \in \SOL(G_U^*)$ and we have, $$M_w(G^*) = w \t x^* = \widetilde{w} \t y \leq M_{\widetilde{w}}(G_U^*).$$ 

The inequality above holds since $y$ is a feasible vector for the maximization program defining $M_{\widetilde{w}}(G_U^*)$. Now since $G_U^*$ is a graph with $< k$ vertices, the induction hypothesis dictates that $ M_{\widetilde{w}}(G_U^*) \leq \alpha_w(G_U^*)$. Moreover, since $G_U^*$ is a subgraph of $G^*$, every independent set in $G_U^*$ is an independent set in $G^*$ and hence we have $\alpha_w(G_U^*) \leq \alpha_w(G^*)$. Hence we have,
$$M_w(G^*) \leq \alpha_w(G^*).$$

After considering two exhaustive cases, the inequality $M_w(G^*) \leq \alpha_w(G^*)$ is proved. This concludes the proof of Theorem \ref{thm:main}. 
\end{proof}

The uniform weighted version of Theorem \ref{thm:main} is stated below.

\begin{theorem}
\label{thm:alpha LPCC}
For any simple graph $G$,
$$\alpha(G) = \max\; \{\bfe \t x \mid x\in \SOL(G)\} = \underset{x\in\mathbb{Z}^n}\max\; \{\bfe \t x \mid x\in \SOL(G)\},$$
where $A$ is the adjacency matrix of $G$, $I$ is the identity matrix and $\bfe$ is the vector of 1's in $\Real^n$.
\end{theorem}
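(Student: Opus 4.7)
The plan is to obtain Theorem~\ref{thm:alpha LPCC} as an immediate corollary of Theorem~\ref{thm:main} together with Lemma~\ref{lem:integer solutions of lcpg}. Since no new structural content is needed, the argument is essentially a packaging step and I do not expect any genuine obstacle; the only thing to be careful about is confirming that the integer maximum is actually attained, so that the second equality is a true equality and not merely an inequality.

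First, I would apply Theorem~\ref{thm:main} with the uniform weight vector $w = \bfe \geq 0$. Noting that $\alpha_{\bfe}(G) = \alpha(G)$ and that $w^{\top} x = \bfe^{\top} x$, this directly yields
\[
\alpha(G) \;=\; \max\{\bfe^{\top} x \mid x \in \SOL(G)\},
\]
which gives the first equality.

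Next, for the second equality, I would use Lemma~\ref{lem:integer solutions of lcpg}, which identifies the integer points of $\SOL(G)$ with the characteristic vectors of maximal independent sets. Let $S^{\ast}$ be a maximum independent set of $G$, which is in particular maximal. Then $\bfone_{S^{\ast}} \in \{0,1\}^{n} \cap \SOL(G)$ and $\bfe^{\top} \bfone_{S^{\ast}} = |S^{\ast}| = \alpha(G)$, so
\[
\underset{x \in \mathbb{Z}^{n}}{\max}\;\{\bfe^{\top} x \mid x \in \SOL(G)\} \;\geq\; \alpha(G).
\]
Conversely, since $\SOL(G) \cap \mathbb{Z}^{n} \subseteq \SOL(G)$, the integer-restricted maximum is bounded above by the continuous one, which by the first equality equals $\alpha(G)$. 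The two inequalities together close the chain and establish the theorem.
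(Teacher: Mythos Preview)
Your proposal is correct and matches the paper's approach: the paper states Theorem~\ref{thm:alpha LPCC} immediately after the proof of Theorem~\ref{thm:main} as its ``uniform weighted version'' without giving a separate proof, relying on exactly the combination of Theorem~\ref{thm:main} (with $w=\bfe$) and Lemma~\ref{lem:integer solutions of lcpg} that you spell out.
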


Recall that from Lemma \ref{lem:integer solutions of lcpg}, for a graph $G=(V,E)$, with adjacency matrix $A$, the integer solutions of $\LCP(A+I,-\bfe)$ are characteristic vectors of maximal independent sets of $G$, whereby $\alpha(G)$ is the maximum $\ell_1$ norm of binary vectors in $\SOL(G)$. The next section discusses the minimum $\ell_1$ norm of points in $\SOL(G)$.

\subsection{Minimum $\ell_1$ norm solution of $\LCP(G)$ and the independent domination number}
\label{sec:minimum l1 norm of solg}
We now study a few properties of the minimum $\ell_1$ norm of vectors in $\SOL(G)$. We define,
\begin{equation}
\label{eq:define minimum l1 norm}
m(G) := \min \;\{\bfe \t x \mid x\in \SOL(G)\}.
\end{equation}
Interestingly, unlike the maximum $\ell_1$ norm, the minimum $\ell_1$ norm is not necessarily an integer. Recall that for a simple graph G, $\beta(G)$ is the size of the smallest maximal independent set. Hence by Lemma \ref{lem:integer solutions of lcpg} we have,
\begin{equation}
\label{eq:beta gap}
\beta(G) = \min\;\{\bfe \t x \mid x\in \SOL(G)\cap\bin^n\} \geq m(G).
\end{equation}

The inequality above is strict in general and we show that the gap is strict even for bipartite and regular graphs. However equality is guaranteed if the graph is a forest (this is the claim of Theorem~\ref{thm:forest min l1 beta} we prove below).

\begin{lemma}
\label{lem:regular graphs min l1}
For a regular graph $R_{n,d}$ with $n$ vertices and degree $d$, $$m(R_{n,d}) = \frac{n}{d+1}.$$
\end{lemma}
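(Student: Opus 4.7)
The plan is to prove the two inequalities $m(R_{n,d}) \le \tfrac{n}{d+1}$ and $m(R_{n,d}) \ge \tfrac{n}{d+1}$ separately, both via direct arguments exploiting the fact that the all-ones vector $\bfe$ is an eigenvector of the adjacency matrix of a regular graph.

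For the upper bound, I would reuse the observation already made in the proof of Lemma~\ref{lem:structure of SOL}\ref{lem:Regular graphs beta bound}: since $R_{n,d}$ is $d$-regular, $A\bfe = d\bfe$, so $(A+I)\bigl(\tfrac{\bfe}{d+1}\bigr) = \bfe$. Together with nonnegativity and the complementarity condition (which is trivially satisfied since $(A+I)x-\bfe = 0$), this shows $\tfrac{\bfe}{d+1}\in \SOL(R_{n,d})$. Evaluating the $\ell_1$ norm gives $\bfe^\top\bigl(\tfrac{\bfe}{d+1}\bigr) = \tfrac{n}{d+1}$, so $m(R_{n,d}) \le \tfrac{n}{d+1}$.

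For the matching lower bound, let $x \in \SOL(R_{n,d})$ be arbitrary. From \eqref{eq:lcpg2} we have $(A+I)x \ge \bfe$, and left-multiplying by $\bfe^\top \ge 0$ preserves the inequality:
\[
\bfe^\top (A+I)x \;\ge\; \bfe^\top \bfe \;=\; n.
\]
Since $A$ is symmetric and $R_{n,d}$ is $d$-regular, $\bfe^\top(A+I) = ((A+I)\bfe)^\top = (d+1)\bfe^\top$, so the left-hand side equals $(d+1)\bfe^\top x$. Thus $\bfe^\top x \ge \tfrac{n}{d+1}$ for every $x \in \SOL(R_{n,d})$, which gives $m(R_{n,d}) \ge \tfrac{n}{d+1}$.

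Combining the two bounds yields the claimed equality. I do not anticipate any serious obstacle here; the only subtlety is remembering that for undirected graphs $A$ is symmetric (so $(A+I)\bfe = (d+1)\bfe$ on both sides), which is implicit in the paper's setup. Note also that this lemma, combined with \eqref{eq:beta gap} and the fact that $\beta(R_{n,d})$ is an integer that in general exceeds $\tfrac{n}{d+1}$ (for example when $d+1\nmid n$, or for graphs like $C_5$ where $\beta=2 > 5/3$), immediately exhibits the promised strict gap between $\beta(G)$ and $m(G)$ in the regular (and in particular non-forest) case.
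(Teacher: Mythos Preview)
Your proof is correct. The upper bound is identical to the paper's. For the lower bound, the paper takes a slightly different route: it invokes weak LP duality between $\max\{\bfe^\top x \mid x\ge 0,\ (A+I)x\le \bfe\}$ and $\min\{\bfe^\top x \mid x\ge 0,\ (A+I)x\ge \bfe\}$, notes that $\tfrac{\bfe}{d+1}$ is feasible for the former, and then observes that $m(R_{n,d})$ is bounded below by the latter (its LP relaxation). Your argument instead left-multiplies the feasibility inequality $(A+I)x\ge \bfe$ by $\bfe^\top$ and uses $\bfe^\top(A+I)=(d+1)\bfe^\top$ directly. This is more elementary---it sidesteps LP duality entirely---and in fact it is exactly what LP duality would unwind to here, since $\tfrac{\bfe}{d+1}$ is the dual certificate. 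Either way the conclusion is the same; your version is arguably cleaner.
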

\begin{proof}
Observe that $\frac{n}{d+1}$ is the $\ell_1$ norm of the vector $\frac{\bfe}{d+1}$. Recall that if $A$ is the adjacency matrix of $R_{n,d}$, then $(d,\bfe)$ is an eigenvalue-eigenvector pair and hence $\frac{\bfe}{d+1} \in \SOL(R_{n,d})$. Hence, \begin{equation} \label{eq:reg1}
 m(R_{n,d})\leq \frac{n}{d+1}
.\end{equation}
It can also be verified that $\frac{\bfe}{d+1} \in \{x\mid x\geq 0;\; (A+I)x \leq \bfe\}$. 
Hence,
$$\frac{n}{d+1}\leq\max \;\{\bfe \t x \mid x \geq 0;\; (A + I)x\leq \bfe\}\stackrel{(g)}{\leq}\min \;\{\bfe \t x \mid x \geq 0;\; (A + I)x\geq \bfe\},$$
where $(g)$ follows from LP duality. It is easy to see that $m(R_{n,d}) \geq \min \;\{\bfe \t x \mid x \geq 0;\; (A + I)x\geq \bfe\}$, the LP relaxation of $m(R_{n,d})$ obtained after omitting the complementarity constraint. Hence we have, \begin{equation} \label{eq:reg2}
m(R_{n,d})\geq \frac{n}{d+1}
.\end{equation}
Equations \eqref{eq:reg1} and \eqref{eq:reg2} together prove the lemma.
\end{proof}

Notice that Lemma \ref{lem:regular graphs min l1} along with \eqref{eq:beta gap} gives an alternate proof for Lemma \ref{lem:structure of SOL} \ref{lem:Regular graphs beta bound}. We now look at an example where there is a non-zero gap between $\beta(G)$ and the smallest $\ell_1$ norm of vectors in $\SOL(G)$.

\begin{examplec}
\label{ex:beta gap}
For $k \in \mathbb{N}$, consider the cycle $C_{6k+2}$, with vertex set $V = \{1,2,\hdots, 6k+2\}$ and edges $(i,i+1)$ for $1 \leq i \leq 6k+1$, and $(1,6k+2)$. This graph is a regular bipartite graph with degree $2$. Using Lemma \ref{lem:regular graphs min l1} we have $m(C_{6k+2}) = 2k+1/3$. We now show that $\beta(C_{6k+2}) = 2k+1$. 

Since $\beta(C_{6k+2})$ is an integer lower bounded by $2k+1/3$ (by equation \eqref{eq:beta gap}), it suffices to show that there is a maximal independent set of $C_{6k+2}$ of size $2k+1$. Consider the set $\{1\}\cup\{3i| 1 \leq i \leq 2k\}$. It can be verified that this set is maximally independent with cardinality $2k+1$. Hence we have, $$\beta(C_{6k+2}) = 2k+1 > 2k + 1/3 = m(C_{6n+2}).$$
\end{examplec}

The above example shows that the gap between $\beta$ and $m$ is non-zero even for bipartite and regular graphs.

\subsection{Proof of Theorem \ref{thm:forest min l1 beta}}

Theorem \ref{thm:forest min l1 beta} says that the independent domination number of the graph, $\beta(G)$, is indeed the minimum $\ell_1$ norm of vectors in $\SOL(G)$, provided that the graph $G$ is a forest. Before proving the theorem, we first state and prove an intermediate lemma that examines the structure of the solution set of the $\LCP(A+I,-\bfe)$ for forests.

\begin{lemma}
\label{lem:forest existence of MIS}
For a forest $G=(V,E)$, if $x \in \SOL(G)$, then there exists a maximal independent set of $G$ contained in $\sigma(x)$.
\end{lemma}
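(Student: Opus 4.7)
The plan is to proceed by strong induction on $|V|$. The base case $|V| = 1$ is immediate since $\SOL(G) = \{1\}$ and the sole vertex forms a maximal independent set. For the inductive step, I would first handle the easy case in which $G$ has an isolated vertex $v$: then $\scn{v}{x} = x_v$, so $x_v = 1$ (using Lemma \ref{lem:basic lcpg results}(c)), and Lemma \ref{lem:basic lcpg results}(d) lets me peel $\{v\}$ off, apply induction to $G - v$, and adjoin $v$ to the resulting maximal independent set.

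If $G$ has no isolated vertex then every non-trivial component of the forest contains a leaf $\ell$ whose unique neighbour is some $p$. The degree-one property of $\ell$, combined with the $\LCP(G)$ conditions \eqref{eq:lcpg1}--\eqref{eq:lcpg3}, pins down $(x_\ell, x_p)$, and I would split into three subcases.
\begin{enumerate}[label=(\roman*)]
\item $x_\ell = 0$: then $\scn{\ell}{x} = x_p \geq 1$ combined with $x_p \leq 1$ forces $x_p = 1$, after which complementarity at $p$ forces $x_w = 0$ for every other neighbour $w$ of $p$. Set $v^* := p$ and $D := N[p]$.
\item $x_\ell > 0$ and $x_p = 0$: then complementarity at $\ell$ gives $x_\ell + x_p = 1$, so $x_\ell = 1$. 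Set $v^* := \ell$ and $D := N[\ell] = \{\ell, p\}$.
\item $x_\ell > 0$ and $x_p > 0$: then $x_\ell + x_p = 1$ and complementarity at $p$ forces $x_w = 0$ for every neighbour $w \neq \ell$ of $p$. Set $v^* := p$ and $D := N[p]$; the choice of $p$ rather than $\ell$ is the crucial one, since in the reduction $p$ must dominate all of $N(p) \setminus \{\ell\}$.
\end{enumerate}

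In every subcase the construction ensures that every vertex in $D \cap \sigma(x)$ has its entire $G$-neighbourhood contained in $D$. Consequently, setting $G' := G_{V \setminus D}$ and letting $x'$ be the corresponding subvector of $x$, the quantity $\scn{u}{x}$ is unchanged upon restriction for every $u \in V \setminus D$; hence both feasibility and the complementarity conditions transfer, yielding $x' \in \SOL(G')$. Since $G'$ is a forest on strictly fewer vertices, the inductive hypothesis produces a maximal independent set $I' \subseteq \sigma(x')$ of $G'$, and I would take $I := I' \cup \{v^*\}$. Independence holds because the neighbours of $v^*$ all lie in $D$, hence outside $I' \subseteq V \setminus D$; $I$ dominates $V$ because $v^*$ dominates $D \setminus \{v^*\}$ by construction while $I'$ dominates $V \setminus D$ by induction; and $I \subseteq \sigma(x') \cup \{v^*\} \subseteq \sigma(x)$.

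The main obstacle I anticipate is making the right choice of $v^*$ in subcase (iii), where both $\ell$ and $p$ lie in $\sigma(x)$: selecting $\ell$ would leave $p$'s external neighbours without any dominator in the reduced instance, whereas selecting $p$ allows the deletion of the whole closed neighbourhood $N[p]$ without disturbing the LCP structure on $V \setminus D$. The forest hypothesis enters in two essential ways: it guarantees that every non-trivial component has a leaf (so the induction has a handle), and it ensures that $\ell$ has exactly one neighbour, which is precisely what tightens the local complementarity constraints enough to pin down the values of $x_\ell$ and $x_p$ into the clean trichotomy above.
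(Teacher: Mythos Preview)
Your induction is correct, and the three-way case split on $(x_\ell,x_p)$ together with the choice $v^*=p$ in subcase~(iii) is exactly the right move: the key observation that every vertex of $D\cap\sigma(x)$ has its full $G$-neighbourhood inside $D$ is what guarantees $\scn{u}{x'}=\scn{u}{x}$ for $u\in V\setminus D$, so the LCP conditions pass to $G'$ cleanly. One small edge case you should mention explicitly is $V\setminus D=\emptyset$, where $I=\{v^*\}$ already dominates $V$; otherwise the argument is complete.

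Your route is genuinely different from the paper's. The paper first invokes the structural Lemma~\ref{lem:structure of SOL}\ref{lem:forest SOL} (that $G_{\sigma(x)}$ is a disjoint union of $K_1$'s and $K_2$'s), splits $\sigma(x)$ into the $K_1$-part $V_1$ and the $K_2$-part $V_2=U\cup U'$, removes $V_1\cup N(V_1)$, and then runs a separate greedy leaf-picking procedure on the residual forest $G_W$ to select one endpoint from each $K_2$ so as to dominate the remaining vertices. Your approach bypasses Lemma~\ref{lem:structure of SOL}\ref{lem:forest SOL} entirely: the local analysis at a leaf already forces enough zeros in $x$ to delete a closed neighbourhood and recurse. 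This makes your proof more self-contained and arguably more elementary, at the cost of not exposing the global $K_1/K_2$ structure of $G_{\sigma(x)}$ that the paper later reuses in the proof of Theorem~\ref{thm:forest min l1 beta}.
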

\begin{proof}
Let $G = (V,E)$ be a forest and $x \in \Real^{|V|}$. Denote by $G_{\sigma(x)}$ the subgraph of $G$ induced by $\sigma(x)$. By Lemma \ref{lem:structure of SOL} \ref{lem:forest SOL} we know that if $x \in \SOL(G)$, $G_{\sigma(x)}$ is a union of $K_1$'s and $K_2$'s. Let the graph induced by these resulting $K_1$'s be denoted by $G_1=(V_1,\emptyset)$ and the graph induced by the $K_2$'s be denoted by $G_2=(V_2,E_2)$ with edges $E_2 = \{(u_1,u_1'),(u_2,u_2'),\ldots,(u_m,u_m')\}$. Let $U = \{u_i \mid i = 1,2\ldots,m\}$ and $U' = \{u_i' \mid i = 1,2\ldots,m\}$, whereby $V_2= U\cup U'.$  For $J \subseteq \{1,2,\ldots,m\}$ we denote $U_J := \{u_i \mid i \in J\}$ and similarly for $U'_J$.

Let $W = V\backslash \big{\{}V_1 \cup N(V_1)\big{\}}$. And $G_W = (V_W,E_W)$ be the subgraph of $G$ induced by $W$. Observe that $G_W$ is also a forest and $V_W$ is the disjoint union of $U , U'$ and  $W'$, where $W'$ is the set of vertices in $V$ with neighbours only in $V_2$ but not in $V_1$. Notice that $0< x_u < 1, \; \forall\; u \in U \cup U'$ since any such $u$ has exactly one neighbour that is also in $\sigma(x)$. But since $\scn{u}{x} \geq 1, \;\forall\; u \in W',$ every vertex $u \in W'$ has at least two neighbours in $U \cup U'$. This implies that the number of edges in $G_W$ is at least $m + 2|W'|$. Since $G_W$ is a forest, $m + 2|W'| \leq |W'| + |U| + |U'| - 1= |W'| + 2m - 1$ which proves $W'$ has at most $m-1$ vertices.

It is easy to see that if for some $J \subseteq \{1,2,\ldots,m\}$, $U_J \cup U_{J^c}'$ is a maximal independent set of $G_W$, then $\big{\{}V_1 \cup U_J \cup U_{J^c}'\big{\}}$ is a maximal independent set of $G$ contained in $\sigma(x)$. Thus we prove the former by constructing a maximal independent set of $G_W$ which is contained in $V_2=U\cup U'$. All neighbourhoods in the rest of the proof are with respect to the forest $G_W$.

For the forest $G_{W}$, there exists a leaf vertex in $V_2$ since every vertex in $W'$ has degree at least two. Let $u_m'$ be the leaf vertex without loss of generality. Now consider the subgraph $G_W^{(1)}$ of $G_W$ induced by $W^{(1)}:= \big{\{}W'\backslash N_{W'}(u_m)\big{\}} \cup \big{\{}U_{m-1} \cup U_{m-1}'\big{\}}$, where $U_{m-1}:=U\backslash\{u_m\}$ and $U_{m-1}'=U' \backslash \{u_m'\}$. Observe that $G_W^{(1)}$ is now a forest with the property that vertices in $\big{\{}W'\backslash N_{W'}(u_m)\big{\}}$ have degree at least two, whereby there exists a leaf vertex of $G_W^{(1)}$ in $\{U_{m-1} \cup U_{m-1}'\}$. Moreover, by construction, if $M^{(1)}$ is a maximal independent set in $G_W^{(1)}$ then $\{u_m\}\cup M^{(1)}$ is a maximal independent set in $G_W$. Thus it now suffices to find a maximal independent set of $G_W^{(1)}$ contained in $U_{m-1} \cup U_{m-1}'$. We can proceed in a similar manner for $k $ steps (say, $k\leq m-1$)  by picking the neighbour of some leaf node at every step until the induced subgraph of $G_W$ we are left with, $G_W^{(k)}$, is a subgraph of $G_2$. Choosing any maximal independent set of $G_W^{(k)}$ and the previously chosen neighbours of leaf vertices gives a maximal independent set of $G_W$ contained in $V_2$. This concludes the proof.
\end{proof}

We now prove Theorem \ref{thm:forest min l1 beta} which says that -- \textit{For a forest }$G$, $$m(G) = \beta(G).$$

\begin{proof}[of Theorem~\ref{thm:forest min l1 beta}] Let $x^*$ be the minimizer of \eqref{eq:define minimum l1 norm}. 
From Lemma \ref{lem:structure of SOL} \ref{lem:forest SOL}, since $G$ is a forest, $G_{\sigma(x^*)}$ is a union of $K_1$'s and $K_2$'s. Let the graph induced by these $K_1$'s be $G_1=G_1(x^*)$ and the graph induced by $K_2$'s be $G_2=G_2(x^*)$. We have that for all $i$ in $V(G_1(x^*))$, $x^*_i = 1$ and for all edges $(i,j)$ in $G_2(x^*)$, since $\scn{i}{x^*}=\scn{j}{x^*}=1 $ we must have $0<x^*_i<1$, $0<x^*_j<1$ and $x^*_i+x^*_j=1$. Hence $m(G)=\bfe \t x^* = |V(G_1(x^*))| + \half |V(G_2(x^*))|$ 

But from Lemma \ref{lem:forest existence of MIS}, we know that there exists a maximal independent set of $G$ contained in $\sigma(x^*)$ such that its size is $|V(G_1(x^*))| + \half |V(G_2(x^*))|$, whereby $m(G)\geq \beta(G)$. This along with \eqref{eq:beta gap} proves the theorem.
\end{proof}

\section{Applications} \label{sec:applications} 

Theorem \ref{thm:main} expresses the independence number as a linear program with complementarity constraints (LPCC). In this section we use methods from the theory of LCPs to provide an upper bound for the independence number and also give a sufficient condition for a graph to be a \textit{well-covered}.

\subsection{A strengthening of the Lov\'asz theta}

The Lov\'{a}sz theta \cite{lovasz1991cones}, denoted $\vartheta(G)$, for a graph $G$ is a polynomially computable upper bound on the independence number obtained via semidefinite programming. The bound is tight for perfect graphs. In this section we introduce a new quantity $\vartheta^*(G)$  that improves on $\vartheta(G).$ 
We first briefly introduce semidefinite programming, then describe the Lov\'{a}sz theta bound. To achieve a strengthening of the bound, we first characterize the independence number using a more compact 0-1 ILP, and then obtain a variant of the Lov\'{a}sz theta using semidefinite relaxations of the ILP.

For a graph $G = (V,E)$ the traditional 0-1 ILP for the independence number $\alpha(G)$ maximizes the $\ell_1$ norm of characteristic vectors of all independent sets in a graph.
\begin{equation}
\label{eq:edge-ilp}
\tag{edge-ILP}
\begin{aligned}
\alpha(G) = \underset{x \in \bin^n}{\text{max}} & \bfe \t x,\\
\text{s.t.} & x_i + x_j \leq 1, \quad \forall\;(i,j) \in E
.\end{aligned}
\end{equation}

It is easy to see that characteristic vectors of all independent sets including the empty set are feasible for \eqref{eq:edge-ilp}. The convex hull of these characteristic vectors is called the stable set polytope of $G$ and is denoted by $\STAB(G)$. 

Semidefinite programs~(SDP) are convex optimization problems which generalize linear programs and are efficiently solvable. They have been studied in detail \cite{boyd04convex, vandenberghe1996semidefinite} and often result in tighter bounds for ILPs than those obtained via LP relaxation. Lov\'{a}sz \cite{lovasz1991cones} provided a semidefinite relaxation for $\alpha(G)$. We now briefly describe semidefinite programs in their most general form and then state the SDP relaxation by Lov\'{a}sz.

Let $\mathcal{S}^n$ be the cone of symmetric positive semidefinite matrices in $\Real^{n \times n}$. For $n\times n$ matrices $X=[x_{ij}]$ and $Y=[y_{ij}]$, their dot product is defined as $\big<X ,Y\big> := \sum_{i=1}^n\sum_{j=1}^n x_{ij}y_{ij}$. A semidefinite program in its most general form is the following convex optimization problem,

\begin{equation*}
\label{eq:SDP}
\tag{SDP}
\begin{aligned}
 \underset{X \in \mathcal{S}^n}{\text{max}}
  &  \big<C ,X\big>\\
  \text{s.t.}
  & \big<A_i \t ,X\big> = b_i\quad i = 1,2,\ldots m,
 \end{aligned}
\end{equation*} where $C$ and $A_i$ are symmetric matrices in $\Real^{n \times n}$, and $b_i$ are scalars. Given a vector $x \in \Real^n$ and matrix $W \in \Real^{n\times n}$, let ${\rm L}(x,W) := \begin{bmatrix} 1 & x \t \\x & W\end{bmatrix}$ and $\diag(W):=(w_{11},w_{22},\cdots,w_{nn})$, the vector of diagonal elements of $W$. The Lov\'{a}sz theta is given by the semidefinite program below,
\begin{equation*}
\label{eq:SDP Lovasz theta}
\tag{$\vartheta$-SDP}
\begin{aligned}
\vartheta(G) =  & \underset{X \in \mathcal{S}^n}{\max}
 & &  \big<\bfe\bfe \t ,X\big>\\
 & \text{s.t.}
 & & \big<I ,X\big> = 1,\\
 &&& X_{ij} = 0, \; (i,j) \in E(G),
 \end{aligned}
\end{equation*}
where $I$ is the $n\times n$ identity matrix. It can be shown (see Chapter 3 in \cite{gvozdenovic2008approximating}) that $\vartheta(G)$ is also the maximum $\ell_1$ norm over the following set,
\begin{equation}
\label{eq:theta body}
\TB(G) := \Big{\{x} \mid \exists\;\; W \in \mathcal{S}^n ,\; \diag(W) = x,\; {\rm L}(x,W) \succeq 0, \;w_{ij} = 0, \;\forall \;(i,j)\in E \Big{\}}.
\end{equation}

This set is called the \textit{theta body} of a graph $G$. It is well known that $\TB(G)$ is convex and that $\STAB(G) \subseteq \TB(G)$ whereby $\alpha(G)\leq \vartheta(G)$. Variants of $\vartheta(G)$ have been explored (\cite[Chapter 3]{gvozdenovic2008approximating} and \cite{dukanovic2007semidefinite}) where additional constraints are added to the semidefinite program to obtain stronger upper bounds than $\vartheta(G)$ via SDP relaxations. One such modified theta body $\TB'(G)$ is obtained by adding constraints $X \geq 0$ to \eqref{eq:SDP Lovasz theta} (this constraint corresponds to $W \geq 0$ in \eqref{eq:theta body} \cite{gvozdenovic2008approximating}). The corresponding Lov\'{a}sz theta variant is $\vartheta'(G)$. One can easily show that $\STAB(G) \subseteq \TB'(G) \subseteq \TB(G)$ and hence $\alpha(G) \leq \vartheta'(G) \leq \vartheta(G)$.

\subsubsection{Compact 0-1 ILP for independence number}

We now look at an new ILP formulation for the independence number of a graph $G$ using properties of $\LCP(G)$. The formulation maximizes  the $\ell_1$ norm over maximal independent sets only unlike \eqref{eq:edge-ilp} where all independent sets are considered. For this we characterize the maximal independent sets of $G$ using linear inequalities. Recall that maximal independent sets of $G$ are integer solutions of $\LCP(G)$. An $\LCP(M,q)$ with a bounded solution set can be characterized by the following mixed integer linear program~(MILP) \cite{kojima2002some}. Specifically, if one knows a priori bounds $r,r'$ on $\norm{x}_\infty$ and $\norm{Mx+q}_\infty$, respectively, for all $x $ in the solution set of $ \LCP(M,q)$, this LCP is equivalent to:
\begin{align*}
\text{Find} x \in \Real^n \ \sthat \exists \ z \in \bin^n \sthat \quad
& (1)\; 0 \leq x \leq r z,\\
& (2)\; 0 \leq Mx + q \leq r'(\bfe - z). \tag*{MILP(M,q)}
\end{align*}
The next lemma modifies this MILP characterization for $\LCP(G)$.

\begin{lemma}
\label{lem:polytope for MIS}
For a graph $G=(V,E)$, a vector $x \in \bin^{|V|}$ solves $\LCP(G)$ if and only if,
$$0 \leq (A + I)x - \bfe \leq (D - I)(\bfe - x),$$
where $A$ is the adjacency matrix of the graph, $D$ is the diagonal matrix of degrees of $G$, $I$ is the $n \times n$ identity matrix and $\bfe$ is the vector of 1's.  
\end{lemma}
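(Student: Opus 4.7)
The plan is to verify the claimed two-sided inequality component-wise and match the resulting conditions against the three defining conditions \eqref{eq:lcpg1}--\eqref{eq:lcpg3} of $\LCP(G)$. Since $x \in \bin^{|V|}$ already implies nonnegativity \eqref{eq:lcpg1}, and since the lower bound $(A+I)x-\bfe\geq 0$ in the stated inequality is literally \eqref{eq:lcpg2}, the content of the lemma reduces to showing that, for binary $x$ satisfying $(A+I)x\geq\bfe$, the upper bound $(A+I)x-\bfe\leq(D-I)(\bfe-x)$ is equivalent to the complementarity condition \eqref{eq:lcpg3}.

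I would then write the upper bound at vertex $i$ in scalar form as $\scn{i}{x}-1\leq(d_i-1)(1-x_i)$, and split into two cases according to $x_i\in\{0,1\}$. When $x_i=1$, the right-hand side is $0$, so combined with $\scn{i}{x}\geq 1$ the inequality forces $\scn{i}{x}=1$, which is precisely $x_i(\scn{i}{x}-1)=0$ at vertex $i$. When $x_i=0$, the complementarity condition $x_i(\scn{i}{x}-1)=0$ holds trivially, while the upper bound reduces to $\scn{i}{x}\leq d_i$, which is automatic since $\scn{i}{x}=\sum_{j\in N(i)}x_j$ is a sum of $d_i$ terms each lying in $\{0,1\}$. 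The converse direction is the same case analysis read backwards: given a binary solution $x$ of $\LCP(G)$, the upper bound holds trivially at vertices with $x_i=0$ and is forced by complementarity at vertices with $x_i=1$.

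I do not expect a serious obstacle here; the argument is essentially bookkeeping, and it can alternatively be framed as specializing the general MILP reformulation of $\LCP(M,q)$ quoted just before the lemma to $M=A+I$, $q=-\bfe$, with the a priori bounds $r=1$ (by Lemma~\ref{lem:basic lcpg results}\ref{lem:sol subset of [0,1]}) and the per-vertex bound $r'_i=d_i-1$ (coming from $\scn{i}{x}\leq d_i$ when $x_i=0$), and identifying the binary auxiliary variable $z$ with $x$ itself. The only subtlety worth flagging is that the factor $d_i-1$ rather than $d_i$ on the right-hand side is exactly what makes the inequality tight enough to encode complementarity at vertices with $x_i=1$.
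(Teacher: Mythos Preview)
Your proposal is correct and follows essentially the same approach as the paper: write the inequality componentwise as $\scn{i}{x}-1\leq (d_i-1)(1-x_i)$ and do a case split on $x_i\in\{0,1\}$ to match it against the complementarity condition. The paper's ``only if'' direction is phrased via Lemma~\ref{lem:integer solutions of lcpg} (so $x=\bfone_S$ for a maximal independent set $S$) rather than your direct case analysis, but this is cosmetic.
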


\begin{proof}
Observe that the inequalities in the statement of the lemma are in fact, $$0\leq \scn{i}{x} -1 \leq (d_i - 1)(1-x_i),$$ 
for all $i\in V.$
Let $x \in \bin^{|V|}$ satisfy the inequalities in the statement of the lemma. Clearly $x_i \geq 0$ and $\scn{i}{x} \geq 1$. Moreover since $x_i \in \bin$, $x_i > 0$ implies $x_i=1$ whereby the second inequality above forces $\scn{i}{x}=1$. Hence we can conclude that $x_i(\scn{i}{x} - 1)=0$. This implies $x$ solves the $\LCP(G)$. 

For the only if part of the lemma, let $\bfone_S \in \SOL(G)$ implying $S$ is a maximal independent set of $G$ by Lemma \ref{lem:integer solutions of lcpg}. By Lemma \ref{lem:basic lcpg results} \ref{lem:dominating} $S$ is a dominating set whereby $\mathcal{C}(\bfone_S) \geq \bfe$. Moreover since $S$ is a maximal independent set, it can be easily seen $\scn{i}{\bfone_S} -1\leq (d_i -1)$ for all $i$.
This proves the lemma.
\end{proof}

For the convenience of notation, let $${\rm MAXIS}(G) := \{x \in \Real^{|V|} \mid 0 \leq x \leq \bfe,\ 0 \leq (A + I)x - \bfe \leq (D - I)(\bfe - x)\}$$
and \[{\rm MAXIS}_{\mathbb{Z}}(G):= {\rm MAXIS}(G)\cap \mathbb{Z}^{|V|}.\]

 Binary vectors in ${\rm MAXIS}_{\mathbb{Z}}(G)$ are characteristic vectors of maximal independent sets of the graph and vice versa from Lemma \ref{lem:integer solutions of lcpg} and \ref{lem:polytope for MIS}. After characterizing maximal independent sets by linear inequalities, the following theorem states the 0-1 ILP formulation for computing the independence number.

\begin{theorem}
\label{thm:ILP for alpha}
For a graph $G$ with adjacency matrix $A$,
\begin{equation}
\label{eq:ILP for alpha1}
\tag{ILP*}
\begin{aligned}
\alpha(G) = \underset{x \in \bin^n}{\max} & \bfe \t x,\\
{\rm s.t.} \quad & 0 \leq (A + I)x - \bfe \leq (D - I)(\bfe - x),
\end{aligned}
\end{equation}
where $D$ is the diagonal matrix of the degrees of the vertices, $I$ is the $n \times n$ identity matrix and $\bfe$ is the vector of 1's.
\end{theorem}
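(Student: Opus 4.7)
The plan is to derive Theorem~\ref{thm:ILP for alpha} essentially as a corollary of two earlier lemmas plus the trivial observation that every maximum independent set is maximal. Specifically, Lemma~\ref{lem:polytope for MIS} already asserts that a binary vector $x \in \bin^{n}$ satisfies the constraint $0 \leq (A+I)x - \bfe \leq (D-I)(\bfe - x)$ if and only if $x$ solves $\LCP(G)$. Combined with Lemma~\ref{lem:integer solutions of lcpg}, which identifies the binary solutions of $\LCP(G)$ with characteristic vectors of maximal independent sets of $G$, we conclude that the feasible set of the integer program \eqref{eq:ILP for alpha1} is exactly $\{\bfone_S \mid S \subseteq V \text{ is a maximal independent set of } G\}$.

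Having identified the feasible set, the result will follow because the objective $\bfe\t \bfone_S = |S|$ is the cardinality of the associated set. Since every maximum independent set $S^*$ is in particular maximal, $\bfone_{S^*}$ is feasible for \eqref{eq:ILP for alpha1}, giving $\alpha(G) \leq \max\{\bfe\t x \mid x \text{ feasible}\}$. Conversely, every feasible $x$ is of the form $\bfone_S$ for some maximal (hence in particular) independent set $S$, so $\bfe\t x = |S| \leq \alpha(G)$. Combining the two inequalities yields the claimed equality.

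There is essentially no obstacle beyond stitching together the two cited lemmas; the proof is a short two-line argument. The only small point worth making explicit in the write-up is why the inclusion $\{\bfone_S : S \text{ is a maximum independent set}\}\subseteq$ feasible region holds, which is simply the fact that maximum independent sets are maximal, so Lemma~\ref{lem:integer solutions of lcpg} applies. After that, both inequalities between $\alpha(G)$ and the optimal value of \eqref{eq:ILP for alpha1} are immediate. It is worth remarking that unlike \eqref{eq:edge-ilp}, whose feasible lattice is $\STAB(G)\cap\bin^n$ (characteristic vectors of \emph{all} independent sets, including the empty set), the feasible lattice of \eqref{eq:ILP for alpha1} is the strictly smaller set of characteristic vectors of \emph{maximal} independent sets; this compactness is exactly what makes \eqref{eq:ILP for alpha1} a more attractive starting point for SDP lift-and-project relaxations used later to obtain the strengthened Lov\'asz theta.
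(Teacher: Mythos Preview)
Your proposal is correct and mirrors the paper's approach: the paper presents Theorem~\ref{thm:ILP for alpha} immediately after Lemma~\ref{lem:polytope for MIS} without a separate proof, treating it as an immediate consequence of Lemmas~\ref{lem:integer solutions of lcpg} and~\ref{lem:polytope for MIS} together with the observation that maximum independent sets are maximal. Your write-up makes explicit exactly the stitching the paper leaves implicit.
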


Note that the above 0-1 ILP for $\alpha(G)$ is efficient since it has $\mathcal{O}(n)$ constraints (independent of the number of edges) as compared to $\mathcal{O}(|E|)$ constraints in \eqref{eq:edge-ilp}. The above ILP is now convexified using lift-and-project method to obtain an upper bound on the independence number via SDP relaxations.

\subsubsection{Lov\'asz Schrijver relaxations}

Given a binary lattice, the Lovasz Schrijver relaxation (Section 10.3 \cite{conforti2014integer}) is a \textit{lift and project} method to approximate its convex hull. Let $P = \{x \mid Fx \geq b\}$ for some matrix $F$ and vector $b$ and $P_{\mathbb{Z}}:=\bin^n\cap P$. First the linear inequalities $Fx \geq b$ are converted to second order inequalities by performing the following multiplications for $i = 1,2,\cdots,n$,
\[x_i (Fx - b) \geq 0  \qquad {\rm and} \qquad (1 - x_i) (Fx - b) \geq 0.\]
The second order terms in these inequalities are then replaced by new variables resulting in linear inequalities in a higher dimensional space. This step is called the \textit{lifting} step. 

In the second order inequalities above, replacing $x_i^2$ by $x_i$, and $x_ix_j$ by $w_{ij}$ gives linear inequalities in a $\Real^{n+\frac{n(n-1)}{2}}$ dimensional space from the lifting step. The linearized inequalities are denoted by $\big<\widetilde{F}, {\rm L}(x,W)\big> \geq \widetilde{b}$ and the resulting \textit{linearized polytope} by $\widetilde{P} = \{(x,W) \mid \big<\widetilde{F}, {\rm L}(x,W) \big>\geq\widetilde{b}\}$ where $W = [w_{ij}]$ is a symmetric matrix. Now consider the two sets derived from $P$,
\[M_+(P) = \{(x,W)\in \widetilde{P} \mid \diag(W) = x,\ {\rm L}(x,W) \in \mathcal{S}^{n+1} \} \aur N_+(P) = \{x \mid \exists\; W \in \mathcal{S}^n,\ (x,W) \in M_+(P)\}.\]
It is easy to see that $M_+(P)$ is convex and $N_+(P)$ being the projection of $M_+(P)$ on the space of $x$ is also convex. This step is hence called the \textit{projection} step. One can show the following (Section 10.3 \cite{conforti2014integer}),
\begin{equation}
\label{eq:compare N+}
\conv(P_{\mathbb{Z}}) \subseteq N_+(P) \subseteq P
.\end{equation} 

Let $\TB^*(G) := N_+({\rm MAXIS}(G))$ be obtained using the Lovasz Schrijver relaxation on \eqref{eq:ILP for alpha1}, i.e.,
$$\TB^*(G) = \{x \mid  \exists\;W \in \mathcal{S}^n,\ \diag(W)=x,\ {\rm L}(x,W)\succeq 0,\ (x,W) \in P^*\},$$ where $P^*$ is the \textit{linearized polytope} derived from MAXIS($G$) given by the inequalities,
\begin{align}
x_i(x_j) \geq 0, &\implies w_{ij} \geq 0, \label{eq:positive W}\\
\left.\begin{aligned}
        (1-x_i)(x_j) \geq 0, \\
        x_i(1 - x_j) \geq 0,
       \end{aligned}
 \right\}
  &\implies  x_i \geq w_{ij}, \\
(1-x_i)(1 - x_j) \geq 0,\ &\implies w_{ij} + 1 \geq x_i + x_j, \\
x_i(\scn{j}{x}-1) \geq 0,\ &\implies w_{ij} - x_i + \sum_{k \in V} w_{ik}a_{jk} \geq 0, \\
(1-x_i)(\scn{j}{x}-1) \geq 0,\ &\implies x_j + x_i - w_{ij} - 1 +  \sum_{k \in V} a_{jk}(x_{k}-w_{ik}) \geq 0, \\
x_i(d_j(1 -x_j) - \scn{j}{x} + 1) \geq 0,\ &\implies (d_j+1)(x_i - w_{ij}) - \sum_{k \in V}a_{ik}w_{ik} \geq 0, \label{eq:edge W_ij}\\
(1-x_i)(d_j(1 - x_j) - \scn{j}{x} + 1) \geq 0,\ &\implies (d_j+1)(1 + w_{ij} - x_j - x_i) + \sum_{k \in V}a_{ik}(w_{ik}-x_k) \geq 0, \label{eq:21}
\end{align}
for all $i$, $j$ in $V$. That is, $P^*=\{(x,W) \mid \diag(W) = x,\ \eqref{eq:positive W}-\eqref{eq:21} \ {\rm hold}\}. $

 Let $\vartheta^*(G)$ be the maximum $\ell_1$ norm on $\TB^*(G)$. We show in the following theorem that this Lov\'{a}sz theta variant is a better upper bound for $\alpha(G)$ than $\vartheta'(G)$.

\begin{theorem}
For a graph $G$,
$$\alpha(G) \leq \vartheta^*(G) \leq \vartheta'(G).$$
\end{theorem}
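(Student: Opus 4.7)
The plan proceeds by proving the two inequalities separately.

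For $\alpha(G)\leq\vartheta^{\ast}(G)$: let $S^{\ast}$ be a maximum independent set, so $\bfe^{\top}\bfone_{S^{\ast}}=\alpha(G)$. By Lemma~\ref{lem:integer solutions of lcpg} and Lemma~\ref{lem:polytope for MIS} (equivalently Theorem~\ref{thm:ILP for alpha}), $\bfone_{S^{\ast}}\in\mathrm{MAXIS}_{\mathbb{Z}}(G)$. The generic Lov\'asz--Schrijver containment~\eqref{eq:compare N+} applied to $P=\mathrm{MAXIS}(G)$ gives $\conv(\mathrm{MAXIS}_{\mathbb{Z}}(G))\subseteq N_+(\mathrm{MAXIS}(G))=\TB^{\ast}(G)$, so $\bfone_{S^{\ast}}\in\TB^{\ast}(G)$, and thus $\alpha(G)\leq\vartheta^{\ast}(G)$.

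For $\vartheta^{\ast}(G)\leq\vartheta'(G)$: I plan to establish the pointwise inclusion $\TB^{\ast}(G)\subseteq\TB'(G)$. Fix $x\in\TB^{\ast}(G)$ with a witness $W\in\mathcal{S}^n$ such that $\diag(W)=x$, $\mathrm{L}(x,W)\succeq 0$, and $(x,W)$ satisfies the linearised inequalities~\eqref{eq:positive W}--\eqref{eq:21}. I will show that this same $W$ already certifies $x\in\TB'(G)$. Three of the four defining conditions of $\TB'(G)$ come for free: $\diag(W)=x$ and $\mathrm{L}(x,W)\succeq 0$ are hypotheses, and the entry-wise nonnegativity $W\geq 0$ is exactly~\eqref{eq:positive W}. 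Only the edge-orthogonality $w_{ij}=0$ for $(i,j)\in E$ remains to be verified.

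The decisive step is to specialise~\eqref{eq:edge W_ij} to $i=j$: the parent inequality $x_i(d_i(1-x_i)-\scn{i}{x}+1)\geq 0$, upon linearisation using $w_{ii}=x_i$, collapses to $-\sum_{k\in V}a_{ik}w_{ik}\geq 0$. Combined with $a_{ik}\in\{0,1\}$ and $w_{ik}\geq 0$ from~\eqref{eq:positive W}, each product $a_{ik}w_{ik}$ is simultaneously non-positive and non-negative, forcing $w_{ik}=0$ for every neighbour $k$ of $i$, i.e.\ for every $(i,k)\in E$. Hence $W$ satisfies all defining constraints of $\TB'(G)$, yielding $\TB^{\ast}(G)\subseteq\TB'(G)$; taking the maximum of $\bfe^{\top}x$ over both sets then gives $\vartheta^{\ast}(G)\leq\vartheta'(G)$. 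The main obstacle in the whole argument is recognising that the diagonal case $i=j$ of the lifted constraint~\eqref{eq:edge W_ij} --- which looks degenerate and is easy to overlook --- is exactly what encodes the edge-orthogonality condition $w_{ij}=0$; once this is spotted, the remaining verifications are routine.
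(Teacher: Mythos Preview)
Your proposal is correct and follows essentially the same route as the paper's own proof: both use the generic Lov\'asz--Schrijver containment~\eqref{eq:compare N+} for the first inequality, and for the second both specialise the lifted constraint~\eqref{eq:edge W_ij} to $i=j$ (using $w_{ii}=x_i$) to obtain $\sum_{k}a_{ik}w_{ik}\leq 0$, which together with~\eqref{eq:positive W} forces $w_{ik}=0$ on edges and hence $\TB^{\ast}(G)\subseteq\TB'(G)$.
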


\begin{proof}
It suffices to show that $\conv({\rm MAXIS}_{\mathbb{Z}}(G)) \subseteq \TB^*(G) \subseteq \TB'(G)$ since the quantities compared in the statement of the theorem are the maximum $\ell_1$ norms on these sets.

From equation \eqref{eq:compare N+}, we have, $$\conv({\rm MAXIS}_{\mathbb{Z}}(G)) \subseteq N_+({\rm MAXIS}(G)) := \TB^*(G).$$
Now consider $x \in \TB^*(G)$, and a $W$ such that $(x,W) \in M_+({\rm MAXIS}(G))$, \ie, $\diag(W) = x$, ${\rm L}(x,W) \succeq 0$ and $(x,W) \in P^*$.

Now, substituting $j =i$ in \eqref{eq:edge W_ij} gives $\sum_{k \in V}a_{ik}w_{ik} \leq 0,$ since $w_{ii} = x_i$. Also, from \eqref{eq:positive W} have $w_{ik} \geq 0$. This means, $$w_{ik} = 0,\qquad \forall \; (i,k) \in E$$ Thus $x \in \TB(G)$ with $W$ having all entries non-negative whereby $x \in \TB'(G)$ implying $$\TB^*(G) \subseteq \TB'(G).$$
This proves that $\vartheta^*(G)$ is a stronger upper bound of $\alpha(G)$ than $\vartheta'(G)$ and hence $\vartheta(G)$.
\end{proof}

A similar convexification of $\STAB(G)$ has been studied previously under the name $N_+(\FRAC(G))$ \cite{conforti2014integer}, where $\FRAC(G)$ is the LP relaxation of $\STAB(G)$. However it is computationally more expensive since the linearized polytope has $\mathcal{O}(n^3)$ constraints in the SDP as against $\mathcal{O}(n^2)$ for $\TB^*(G)$. It remains an open problem to compare $N_+(\FRAC(G))$ and $\TB^*(G)$.

\subsubsection{Simulation results comparing Lov\'asz theta variants}

Denote by $\vartheta_{\FRAC}(G)$ the maximum $\ell_1$ norm over the convex set $N_+(\FRAC(G))$. We computed $\vartheta_{\FRAC}(G),$ $  \vartheta^*(G),\vartheta'(G)$ and $\vartheta(G)$ for Erd\H{o}s–R\'enyi graphs $\mathcal{G}_{(n,p)}$; we were able to perform these simulations only for $n\leq 25$. In this model, a random graph over $n$ vertices is chosen such that each edge is included in the graph with probability $p$ independent from every other edge. Table \ref{tab:variants comparison} shows the values of $\alpha(G)$, $\vartheta_{\FRAC}(G)$, $\vartheta^*(G)$, $\vartheta'(G)$ and $\vartheta(G)$ computed for twelve randomly chosen graphs against the parameters of the distribution from which they were drawn. 
The simulations verify that indeed $\vartheta^*(G) \leq \vartheta'(G).$ 
They also
indicate  $\vartheta_{\FRAC}(G) \geq \vartheta^*(G)$. However, no comparison is known between the sets $N_+(\FRAC(G))$ and $\TB^*(G)$, and hence no conclusions can be drawn regarding the ordering of $\vartheta_{\FRAC}(G)$ and $\vartheta^*(G)$ in general. The simulations were performed using the MATLAB-SDPT3 package \cite{toh1999sdpt3, tutuncu2003solving}.

\begin{table}[h]
\centering
\caption{Comparison of variants of Lov\'asz theta. $G \sim \mathcal{G}_{(n,p)}$.}
\label{tab:variants comparison}
\begin{tabular}{|c|c|c|c|c|c|}
\hline
$(n,p)$ & $\alpha(G)$ & $\vartheta_{\FRAC}(G)$ & $\vartheta^*(G)$ & $\vartheta'(G)$ & $\vartheta(G)$ \\ \hline
(15,0.2)   & 7          & 7.0000     & 7.0187    & 7.0669    & 7.0771    \\ \hline
(15,0.4)   & 7          & 7.0006     & 7.3052    & 7.3153    & 7.3402    \\ \hline
(15,0.6)   & 6          & 6.0290     & 6.0723    & 6.0993    & 6.1320    \\ \hline
(15,0.8)   & 5          & 5.2650     & 5.5226    & 5.5350    & 5.5768    \\ \hline
(20,0.2)   & 9          & 9.1798    & 9.3323    & 9.3496    & 9.3863     \\ \hline
(20,0.4)   & 8          & 8.6916     & 8.9087     & 8.9206    & 8.9459    \\ \hline
(20,0.6)   & 8          & 8.4051    & 8.5759    & 8.6100    & 8.6306    \\ \hline
(20,0.8)   & 9          & 9.0412     & 9.3934     & 9.4858    & 9.4871    \\ \hline
(25,0.2)   & 11         & 11.2714    & 11.6478    & 11.7194    & 11.7337    \\ \hline
(25,0.4)   & 11         & 11.2339    & 11.3933    & 11.3997    & 11.4134    \\ \hline
(25,0.6)   & 10         & 10.2903    & 10.3012    & 10.3120     & 10.3457    \\ \hline
(25,0.8)   & 11         & 11.5396    & 11.7890    & 11.8351    & 11.9137    \\ \hline
\end{tabular}

\end{table}

\subsection{Well covered graphs}
Recall that a graph is well-covered~\cite{plummer1993well} if all its maximal independent sets are of the same cardinality. A sufficient condition for a graph $G$ to be well-covered is that $\bfe \t x$ is constant for all $x \in \SOL(G)$ since this means that all integral solutions of the $\LCP(G)$ have the same $\ell_1$ norm, i.e., the cardinality of all maximal independent sets of $G$ is equal. In general this is not a necessary condition for the graph to well covered due to the gap in $m(G)$ and $\beta(G)$ -- an example is the cycle $C_5$ which is well-covered but has $\ell_1$ norm $=2$ for all integral solutions (all maximal independent sets are of size two). However since $C_5$ is also regular with degree two, $\LCP(C_5)$ has a fractional solution $\frac{\bfe}{3}$, with 
$\ell_1$ norm $=\frac{5}{3}$. 

However, if the graph is a forest, we have the following theorem.

\begin{theorem}
A forest $G$ is well-covered if and only if $\bfe \t x$ is constant for  $x \in \SOL(G).$ Moreover, if $G$ has no isolated vertices  $\bfe\t x =\half |V(G)|$ for all $x\in \SOL(G)$.
\end{theorem}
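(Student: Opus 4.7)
The plan is to derive the ``iff'' directly from Theorems~\ref{thm:main} and~\ref{thm:forest min l1 beta}, and then prove the ``moreover'' clause via a bipartition argument. For the $(\Leftarrow)$ direction of the iff, I would invoke Lemma~\ref{lem:integer solutions of lcpg}: the characteristic vector of every maximal independent set lies in $\SOL(G)$, so if $\bfe\t x$ is constant on $\SOL(G)$ then every maximal independent set has the same cardinality (namely that $\ell_1$-norm), and hence $G$ is well-covered. For $(\Rightarrow)$, well-coveredness means $\alpha(G)=\beta(G)$; Theorem~\ref{thm:main} gives $\max_{x\in\SOL(G)}\bfe\t x=\alpha(G)$, and Theorem~\ref{thm:forest min l1 beta} (using that $G$ is a forest) gives $\min_{x\in\SOL(G)}\bfe\t x=\beta(G)$. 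Equality of max and min over $\SOL(G)$ forces $\bfe\t x$ to be constant there, with common value $\alpha(G)$.

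It then remains to show $\alpha(G) = \half|V(G)|$ whenever $G$ is a well-covered forest without isolated vertices, since we already know the common constant equals $\alpha(G)$. First I would reduce to a single tree: each connected component of $G$ is itself well-covered, because any maximal independent set of $G$ is a disjoint union of maximal independent sets, one from each component, so a size-disparity within any single component would propagate to $G$. Next, for a connected tree $T$ on at least two vertices, bipartiteness yields $V(T)=L\sqcup R$ with both $L$ and $R$ independent. Since $T$ has no isolated vertex, every $r\in R$ has a neighbour, which by bipartiteness must lie in $L$; thus $L$ dominates $V(T)$ and is therefore a maximal independent set of $T$, and symmetrically so is $R$. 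Well-coveredness of $T$ then forces $|L|=|R|=\alpha(T)$, whence $|V(T)|=2\alpha(T)$. Summing over components gives $\alpha(G)=\half|V(G)|$.

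The only non-mechanical step is the observation that both sides of the bipartition of each forest component are maximal independent sets, which is what uses the no-isolated-vertices hypothesis essentially. Everything else is either a direct invocation of Theorems~\ref{thm:main} and~\ref{thm:forest min l1 beta} and Lemma~\ref{lem:integer solutions of lcpg}, or bookkeeping across connected components; I do not anticipate any deeper obstacle.
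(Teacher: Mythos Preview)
Your proposal is correct and follows essentially the same approach as the paper: both directions of the iff are obtained exactly as you describe (via Lemma~\ref{lem:integer solutions of lcpg} and Theorems~\ref{thm:main} and~\ref{thm:forest min l1 beta}), and the ``moreover'' clause is proved by observing that both sides of the bipartition are maximal independent sets when there are no isolated vertices, forcing balance. The only cosmetic difference is that the paper applies the bipartition argument directly to the whole forest, whereas you first reduce to connected components; either way works.
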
 
\begin{proof}
The if part of the theorem is easy and was discussion in the previous paragraph. For the only if part of the theorem, let $G$ be a well-covered forest, which implies $\alpha(G) = \beta(G)$. From Theorem \ref{thm:forest min l1 beta} we know that $\beta(G) = m(G)$ and hence we have, $$\max \{\bfe \t x \mid x \in \SOL(G)\} = \min \{\bfe \t x \mid x \in \SOL(G)\}.$$ This means the $\ell_1$ norm over the solution set of $\LCP(G)$ is a constant for a forest $G$ to be well-covered.

A forest without isolated vertices is a bipartite graph. Both the partite sets of a bipartite graphs are maximal independent sets. Hence a well-covered bipartite graph  $G$ is necessarily a balanced bipartite graph and the cardinality of every maximal independent set is $\frac{|V(G)|}{2}$. This proves the theorem.
\end{proof}
We remark that well-covered graphs $G$ for which each maximal independent set is of size $\half|V(G)|$ are known as \textit{very well-covered graphs}~\cite{plummer1993well}.

An $\LCP(M,q)$ is said to be \textit{w-unique} if the vector $w = Mx + q$ is invariant over the solution set of the LCP. We give a further sufficient condition for a graph to be well-covered based on the \textit{w-uniqueness} of $\LCP(G)$. 
\begin{lemma}
If $\LCP(G)$ is w-unique, then $\bfe \t x$ is constant for all $x \in \SOL(G)$, whereby $G$ is well-covered.
\end{lemma}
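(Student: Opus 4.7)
The plan is to exploit the symmetry of $A+I$ together with the complementarity condition to convert $w$-uniqueness into a conservation law for $\bfe\t x$.

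First I would unpack what $w$-uniqueness means in this setting: for $\LCP(G) = \LCP(A+I,-\bfe)$ the ``residual'' is $w(x) := (A+I)x - \bfe$, so the hypothesis asserts the existence of a vector $w^{*}\geq 0$ with $(A+I)x - \bfe = w^{*}$ for every $x\in\SOL(G)$. The complementarity condition then yields $x\t w^{*} = 0$ for all $x\in\SOL(G)$, which rearranges to the key identity
\begin{equation}
\bfe\t x \;=\; x\t(A+I)x \qquad \text{for every } x\in\SOL(G). \label{eq:key-identity}
\end{equation}

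Next, for any two solutions $x_1,x_2\in\SOL(G)$, subtracting their residuals gives $(A+I)(x_1-x_2) = w^{*}-w^{*} = 0$. Using \eqref{eq:key-identity} and the symmetry of $A+I$, I compute
\[
\bfe\t x_1 - \bfe\t x_2 \;=\; x_1\t(A+I)x_1 - x_2\t(A+I)x_2 \;=\; (x_1-x_2)\t(A+I)(x_1+x_2),
\]
and the right-hand side vanishes because $(A+I)(x_1-x_2)=0$. Hence $\bfe\t x$ is constant over $\SOL(G)$.

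Finally, Lemma~\ref{lem:integer solutions of lcpg} identifies the integer points of $\SOL(G)$ with characteristic vectors of maximal independent sets of $G$; since these all have the same $\ell_1$-norm, every maximal independent set has the same cardinality, i.e.\ $G$ is well-covered. The only mildly delicate step is the algebraic manipulation that converts $w$-uniqueness into \eqref{eq:key-identity} and then exploits the symmetry of $A+I$; the rest is a direct application of previously established lemmas. I do not foresee a real obstacle here.
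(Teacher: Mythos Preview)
Your proposal is correct and follows essentially the same approach as the paper: both arguments combine the complementarity identity $\bfe\t x = x\t(A+I)x$ with $w$-uniqueness (giving $(A+I)x_1 = (A+I)x_2$) and the symmetry of $A+I$ to conclude $\bfe\t x_1 = \bfe\t x_2$. The only cosmetic difference is that the paper equates the cross terms $(x_1)\t(A+I)x_2$ and $(x_2)\t(A+I)x_1$ directly, whereas you package the same computation as a difference-of-squares factorization $(x_1-x_2)\t(A+I)(x_1+x_2)$.
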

\begin{proof}
Let $\LCP(G)$ be \textit{w-unique}. Let $x^1$ and $x^2$ be solutions of $\LCP(G)$.  Then by the complementarity constraint, $\bfe \t  x^1 - (x^1)\t (A + I)x^1=0$ and $\bfe \t  x^2-(x^2)\t (A + I)x^2 =0 $. Moreover, if $\LCP(G)$ is indeed \textit{w-unique}, then we have $  (A + I)x^1 -\bfe= (A + I)x^2 -\bfe $. Consequently, $(x^1)\t (A + I)x^2 =\bfe\t x^1$ and $(x^2)\t (A+I)x^1=\bfe\t x^2$, implying that $\bfe\t x^1=\bfe\t x^2$. This proves the sufficiency of \textit{w-uniqueness} for $\bfe \t x$ is constant for all $x \in \SOL(G)$.
\end{proof}

\subsection{Complexity of LPCCs}

Theorem \ref{thm:alpha LPCC} characterizes the independence number problem as an LPCC. This reduction proves that solving a bounded LPCC with binary inputs is at least as hard as computing the independence number in a simple undirected graph. Hence solving LPCCs are NP-hard in general. The theorem below uses inapproximability results known for the independence number to comment on the inapproximability of LPCCs. Given data vectors $c,d,b,q \in \Real^n$ and matrices $B,C,N,M \in \Real^{n\times n}$, consider an instance of an LPCC as defined in Section~\ref{sec:contributions}.

\begin{theorem}
\label{thm:LPCC complexity}
For any $\epsilon > 0$, unless $P = NP$, there is no polynomial time algorithm to approximate a bounded binary LPCC over $n$ variables within a factor of $n^{1-\epsilon}$ of its optimal value.
\end{theorem}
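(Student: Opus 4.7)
The plan is to obtain the inapproximability claim as a direct corollary of Theorem \ref{thm:main} together with H\aa stad's inapproximability theorem for the independence number. The key observation is that Theorem \ref{thm:main} already exhibits, for any simple graph $G$ on $n$ vertices, a bounded LPCC with binary data whose optimal value equals $\alpha(G)$.

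First I would verify the three structural properties needed to invoke the reduction. (i) \emph{Binary data}: taking $w = \bfe$, the data of the LPCC in Theorem \ref{thm:main} consists of the matrix $M = A+I$, the vector $q = -\bfe$, and the objective vector $c = \bfe$; since $A$ is a $0/1$ adjacency matrix of a simple graph and $I$ and $\bfe$ are binary, every entry is in $\{-1,0,1\}$ and can be encoded with $O(1)$ bits per entry. The auxiliary variable $y$, and the matrices $N, C, B$ and vectors $b, d$ in the general LPCC template can be taken to be zero, so the whole instance is binary and has exactly $n$ variables. (ii) \emph{Boundedness}: by Lemma \ref{lem:basic lcpg results}\ref{lem:sol subset of [0,1]} the entire solution set satisfies $\SOL(G) \subseteq [0,1]^n$, hence the feasible set of the LPCC is contained in the unit cube and the optimum is attained. (iii) \emph{Polynomial-time construction}: the instance is produced from $G$ in time polynomial in $n$.

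Next I would run the standard approximation-preserving argument. Suppose, for contradiction, that for some $\epsilon>0$ there were a polynomial-time algorithm $\mathcal{A}$ that, on every bounded binary LPCC instance with $n$ variables, returned a value within a factor $n^{1-\epsilon}$ of the optimum. Given any simple graph $G$ on $n$ vertices, run $\mathcal{A}$ on the LPCC produced in step one; by Theorem \ref{thm:main} its optimum equals $\alpha(G)$, so $\mathcal{A}$ outputs a number $\widetilde\alpha$ with $\alpha(G)/n^{1-\epsilon} \le \widetilde\alpha \le \alpha(G)$. This would be a polynomial-time $n^{1-\epsilon}$-approximation for the independence number, contradicting H\aa stad's theorem \cite{haastad1996clique} which asserts that, unless $P=NP$, no such approximation exists. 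The choice of $\epsilon$ was arbitrary, so the claim follows.

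The proof is essentially a one-line reduction, and no step presents a substantive obstacle; the only care required is bookkeeping to ensure that the dimension $n$ of the LPCC matches the vertex count $n$ of the graph (so that the inapproximability factor transfers without loss) and that the reduction produces an instance whose data is genuinely binary and bounded, both of which are immediate from Theorem \ref{thm:main} and Lemma \ref{lem:basic lcpg results}\ref{lem:sol subset of [0,1]}.
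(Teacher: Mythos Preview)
Your proposal is correct and follows essentially the same approach as the paper: reduce from the independence number by instantiating the LPCC with $M=A+I$, $q=-\bfe$, $c=\bfe$ and all other data zero, then invoke H\aa stad's inapproximability result. If anything, your write-up is more careful than the paper's, since you explicitly verify boundedness via Lemma~\ref{lem:basic lcpg results}\ref{lem:sol subset of [0,1]} and check that the variable count matches the vertex count so the factor $n^{1-\epsilon}$ transfers without loss.
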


\begin{proof}
Suppose that there exists a polynomial time algorithm to approximate a general LPCC given to obtain a value larger than $n^{1-\epsilon}{\rm OPT}$ where ${\rm OPT}$ is the optimal value of this LPCC. Consider the LPCC with a single variable, i.e., $x\in \Real^n$ and let $c=\bfe, q=-\bfe$ and $M = A+I$ for some graph $G$, where $A$ is the adjacency matrix of $G$, and all other parameters taken as $0$. Then we have ${\rm OPT }=\alpha(G)$. This means that the assumed algorithm would give an approximation of the independence number of the graph within the interval $[n^{1-\epsilon}\alpha(G),\alpha(G)]$ in polynomial time for any simple undirected graph. By the theorem of H{\aa}stad \cite{haastad1996clique}, this means $P=NP$. Hence under the assumption $P \neq NP$ there exists no such algorithm and the statement of the theorem stands proven.
\end{proof}

\section{Conclusion}
\label{sec:conc}
In this paper, a continuous optimization formulation for the weighted independence number of a graph is studied. The weighted independence number is expressed as a linear program with complementarity constraints, specifically, the maximization of a linear objective function over the solution set of a linear complementarity problem~(LCP). 

We show that the maximum $\ell_1$ norm over all solutions of $\LCP(A+I,-\bfe)$,  where  for a graph $G$, $A$ is the adjacency matrix, $I$ is the identity matrix and $\bfe$ is the vector of ones in $\Real^{|V|}$, is the independence number of $G$. Integral solutions of this LCP correspond to characteristic vectors of maximal independent sets of the graph. The minimum $\ell_1$ norm on the other hand is a lower bound of the independent domination number of the graph. Although the bound is not tight in general, we show that the bound is tight if the graph is a forest.

We then studied a few applications. The LCP provides a new ILP formulation for the independence number and for the independent domination number of a graph which is more efficient than the previously known edge-based ILP. Performing semidefinite programming relaxations on this ILP, gives a stronger variant of the Lov\'asz theta. 
The characterization indicates a new sufficient condition for a graph to be well-covered, i.e., a graph with all maximal independent sets of the same cardinality. Moreover, this condition is shown to be necessary in the case of a forest.
It is also inferred that solving LPCCs remains not only hard to solve but also hard to approximate within a factor $n^{1-\epsilon}$ of the solution for any $\epsilon > 0$ even if the problem data is binary.

\bibliographystyle{plainini}
\bibliography{ref}

\end{document}